\documentclass[journal]{IEEEtran}

\hyphenation{op-tical net-works semi-conduc-tor}
\usepackage{graphicx,amssymb,lineno}
\usepackage{amsmath,amsfonts,amssymb}
\usepackage{amsthm}
\newtheorem{theorem}{Theorem}
\usepackage{algorithm}
\usepackage{algorithmic}
\usepackage{setspace}
\usepackage[usenames]{color}
\usepackage{float}
\usepackage{mathtools}
\usepackage{cite}
\usepackage{epstopdf}
\usepackage[numbers,sort&compress]{natbib}
\usepackage{array}
\allowdisplaybreaks[4]

\renewcommand{\algorithmicrequire}{\textbf{Initialization:}}

\usepackage{graphics,color,epsfig,graphpap,rotate}
\usepackage{times,verbatim,epsfig,graphicx,latexsym}
\usepackage{url}
\usepackage{subfigure}
\usepackage{booktabs}

\begin{document}

\title{Sum Rate Maximization under AoI Constraints for RIS-Assisted mmWave Communications}

\author{Ziqi~Guo,
        Yong~Niu,~\IEEEmembership{Senior Member,~IEEE,}
        Shiwen~Mao,~\IEEEmembership{Fellow,~IEEE,}
        Changming~Zhang,
        Ning~Wang,~\IEEEmembership{Member,~IEEE,}
        Zhangdui~Zhong,~\IEEEmembership{Fellow,~IEEE,}
        and~Bo~Ai,~\IEEEmembership{Fellow,~IEEE}
\thanks{Copyright (c) 2015 IEEE. Personal use of this material is permitted. However, permission to use this material for any other purposes must be obtained from the IEEE by sending a request to pubs-permissions@ieee.org. This work was supported in part by the National Key Research and Development Program of China under Grant 2021YFB2900301, in part by the National Key Research and Development Program of China under Grant 2020YFB1806903, in part by the National Natural Science Foundation of China under Grants 62221001, 62231009, and U21A20445, in part by the Fundamental Research Funds for the Central Universities, China, under Grants 2022JBQY004 and 2022JBXT001, and in part by the Fundamental Research Funds for the Central Universities under Grant 2023JBMC030. (\emph{Corresponding authors: Yong Niu; Bo Ai.})}
\thanks{Ziqi Guo is with the State Key Laboratory of Advanced Rail Autonomous Operation, Beijing Jiaotong University, Beijing 100044, China, and also with the Collaborative Innovation Center of Railway Traffic Safety, Beijing Jiaotong University, Beijing 100044, China (e-mail: 21120053@bjtu.edu.cn).}
\thanks{Yong Niu is with the State Key Laboratory of Advanced Rail Autonomous Operation, Beijing Jiaotong University, Beijing 100044, China (e-mail: niuy11@163.com).}
\thanks{Shiwen Mao is with the Department
of Electrical and Computer Engineering, Auburn University,
Auburn, AL, 36849-5201 USA (e-mail: smao@ieee.org).}
\thanks{Changming Zhang is with the Research Institute of Intelligent Networks, Zhejiang Lab, Hangzhou 311121, China (e-mail: zhangcm@zhejianglab.com).}
\thanks{Ning Wang is with the School of Information Engineering, Zhengzhou University, Zhengzhou 450001, China (e-mail: ienwang@zzu.edu.cn).}
\thanks{Zhangdui Zhong, and Bo Ai are with the State Key Laboratory of Advanced Rail Autonomous Operation, Beijing Jiaotong University, Beijing 100044, China, and also with the Beijing Engineering Research Center of High-speed Railway Broadband Mobile Communications, Beijing Jiaotong University, Beijing 100044, China (e-mails: zhdzhong@bjtu.edu.cn; aibo@ieee.org).}}
\maketitle

\begin{abstract}
The concept of age of information (AoI) has been proposed to quantify information freshness, which is crucial for time-sensitive applications. However, in millimeter wave (mmWave) communication systems, the link blockage caused by obstacles and the severe
path loss greatly impair the freshness of information received by the user equipments (UEs). In this paper, we focus on reconfigurable intelligent surface (RIS)-assisted mmWave communications, where beamforming is performed at transceivers to provide directional beam gain and a RIS is deployed to combat link blockage.
We aim to maximize the system sum rate while satisfying the information freshness requirements of UEs by jointly optimizing the beamforming at transceivers, the discrete RIS reflection coefficients, and the UE scheduling strategy.
To facilitate a practical solution, we decompose the problem into two subproblems. For the first per-UE data rate maximization problem, we further decompose it into a beamforming optimization subproblem and a RIS reflection coefficient optimization subproblem. Considering the difficulty of channel estimation, we utilize the hierarchical search method for the former and the local search method for the latter, and then adopt the block coordinate descent (BCD) method to alternately solve them. For the second scheduling strategy design problem, a low-complexity heuristic scheduling algorithm is designed. Simulation results show that the proposed algorithm can effectively improve the system sum rate while satisfying the information freshness requirements of all UEs.
\end{abstract}

\begin{IEEEkeywords}
Reconfigurable intelligent surface (RIS), age of information (AoI),
beamforming,
discrete phase shifts, scheduling.
\end{IEEEkeywords}

\section{Introduction}\label{S1}

\IEEEPARstart{I}{n} recent years, a variety of novel applications have emerged, leading to a dramatic increase in mobile data traffic.
According to the International Telecommunication Union (ITU), mobile data traffic is predicted to grow from 62 EB per month in 2020 to 5$,$016 EB per month in 2030~\cite{01}.
This puts a compelling need for higher capacity of communication systems and further intensifies the conflict between the demands for communication capacity and the scarce spectrum resources.
Therefore, millimeter wave (mmWave) communication is
considered as a promising technology for future cellular
networks due to its large available bandwidth~\cite{02,02b,02c}.

On the other hand, many new time-sensitive applications, e.g., autonomous driving, depend on timely and reliable information exchange. Once information is generated, it should be sent to the receiver for timely processing, and outdated information could seriously degrade the user's experience.
In order to capture the information freshness, age of information (AoI) has been proposed, which is defined as the elapsed time since the generation of the most recently received status-update~\cite{08,08b}.
Receivers wish to receive data with a lower AoI, so that the received data will be fresher. The data received with a high AoI could be meaningless or harmful.

In practical communication systems, AoI is generally affected by the scheduling strategy and the quality of the received signal. However, compared with microwave communication below 6 GHz, a key challenge of mmWave communication is that the signal in the mmWave band will experience more severe path loss due to the short wavelength~\cite{03}, which degrades the quality of the received signal. It is necessary to establish a directional transmission link between transceivers with the help of large-scale antenna arrays and beamforming, which can provide high antenna gains for mmWave signals to compensate for path loss. However, the directional transmissions and weak diffraction ability make mmWave signals vulnerable to blockage, especially in indoor and dense urban environments~\cite{04}. The high AoI due to the blockage nature is often unacceptable in most time-sensitive applications.

Fortunately, reconfigurable intelligent surface (RIS) can flexibly configure the propagation environment through software programming, which can be used to combat mmWave link blockage~\cite{05}. Specifically, RIS is a device composed of a large number of passive reconfigurable reflection elements. Each element can independently control the amplitude and phase changes to the incident signal in a software-defined manner~\cite{06}. By a proper design, the passive reflections of all the reflection elements of RIS can be coherently superposed at the desired receiver to increase the received signal power, thus creating a more reliable reflection link and avoiding blockage of the signal in the direct link. Therefore, deploying RIS in mmWave systems exhibits the potential to achieve superior information freshness performance. However, as a passive reflective device, RIS is not capable of transmitting, processing, and receiving signals. The task of channel estimation grows increasingly challenging as the count of reflection elements escalates. Besides, compared with other wireless systems, the large-scale antenna arrays in mmWave systems greatly increase the difficulty of channel estimation~\cite{09}. Therefore, it is necessary to discuss how to guarantee the information freshness performance of the system without knowing CSI.

Therefore, in this paper, we study a downlink RIS-assisted mmWave MIMO system, where the base station (BS) transmits time-sensitive data to user equipments (UEs). Considering the difficulty of channel estimation in the system, we assume that the full CSI is unknown. Different from most of the existing AoI research on the overall AoI minimization, we wish to satisfy the information freshness requirement of each UE in the system, which can provide a better communication experience for UEs. Besides, we pursue the maximization of the system sum rate while satisfying the information freshness requirements, which can further stimulate the potential of mmWave communication systems. Thus, our work aims to maximize the system sum rate over a fixed time interval, i.e., a superframe, while satisfying all the UEs' information freshness requirements in the system. Optimization variables include the beamforming vectors at the BS and UEs, the discrete RIS reflection coefficients, and the scheduling matrix, all of which are coupled in the expressions of the sum rate in the objective function and AoI constraints. The problem is an integer non-convex optimization problem. To reduce the complexity of the solution, we decompose the optimization problem into several subproblems and solve them separately.

The contributions of this paper are summarized as follows:
\begin{itemize}
  \item
  We study a downlink RIS-assisted mmWave MIMO system, in which a RIS is deployed to provide a reliable reflection path against the blockage of direct links, and time-sensitive data is transmitted from the BS to UEs. Each UE in the system has certain requirement for the freshness of information.

  \item
    We formulate the system sum rate maximization problem by optimizing the beamforming vectors at the BS and UEs, the RIS reflection coefficients, and the scheduling matrix, subject to the AoI constraints of UEs. Since all the optimization variables are coupled, the complexity of finding the overall optimal solution by exhaustive search will be prohibitively high. To address this issue, we decompose the original problem into a per-UE rate maximization problem and a scheduling strategy design problem.

  \item
  For the per-UE rate maximization problem, considering that the full CSI is unknown, the hierarchical search method and local search method are used for the optimization of the beamforming vectors and the RIS reflection coefficients, respectively.
  Due to the coupled beamforming vectors and RIS reflection coefficients,
  we use the block coordinate descent (BCD) algorithm to iteratively update the two sets of optimization variables.
  For the scheduling strategy design problem, we propose a low-complexity heuristic strategy, which maximizes the system sum rate over a superframe while satisfying  the AoI constraints.

  \item
  We evaluate the performance of the proposed algorithm with simulations.
  Compared with three benchmark schemes, the simulation results demonstrate that the proposed algorithm ensures the information freshness requirements of all UEs, and the system sum rate is effectively improved.

\end{itemize}

The rest of the paper is organized as follows. Section~\ref{S2}
reviews related work.
The system overview
and problem formulation are presented
in Section~\ref{S3} and Section~\ref{S4}, respectively.
We present the sum rate maximization algorithm
in Section~\ref{S5} and discuss our simulation results
in Section~\ref{S6}.
Section~\ref{S7} concludes this paper.
%
%
%
%
%
%
%

\section{Related Work}\label{S2}
In RIS-assisted communication
systems, a key problem of interest is to jointly devise the RIS reflection coefficients and the active beamforming vectors at the BS to improve system performance.
Numerous studies have been
conducted to solve this problem under different system
setups and assumptions, some of which are for mmWave MIMO communication systems.
Perovic \emph{et al.}~\cite{10} compared two optimization schemes in the indoor RIS-assisted mmWave environment without the line-of-sight (LOS) path. They showed the joint optimization of the RIS reflection elements and the transmit phase precoder can effectively enhance channel capacity.
Wang \emph{et al.}~\cite{11} considered a RIS-assisted downlink mmWave
system with a hybrid beamforming structure. A manifold optimization (MO)-based algorithm was developed to jointly optimize the RIS's reflection
coefficients and the hybrid beamforming at the BS for maximization
of spectral efficiency.
Feng \emph{et al.}~\cite{12} designed a successive
interference cancelation (SIC)-based method for the bandwidth-efficiency maximization problem.
A greedy method is proposed for the hybrid beamforming design
and a complex circle manifold (CCM)-based method is used for updating of the RIS elements.
Li \emph{et al.}~\cite{13} formulated a power minimization problem with signal-to-interference-plus-noise ratio (SINR) constraints in multi-user scenarios. They proposed a two-layer penalty-based algorithm
to decouple variables in SINR constraints and
three different
methods to optimize the BS analog beamforming
and the RIS response matrix in the penalty-based algorithm.

The above works rely on full CSI through channel estimation.
Considering the difficulty of channel estimation in RIS-assisted mmWave MIMO systems,
some studies focus on the problem of beam training with the goal of obtaining the angle of departure (AoD) and angle of arrival (AoA) associated with the dominant path.
Wang  \emph{et al.}~\cite{14} developed an efficient
downlink beam training method for RIS-assisted
mmWave or THz systems. They designed multi-directional beam training sequences
to scan the angular space and proposed an efficient set-intersection-based scheme to identify the best beam alignment.
Wei \emph{et al.}~\cite{15} proposed an effective near-field beam training scheme by designing a near-field codebook that matches the near-field channel model for the extremely
large-scale RIS-assisted system.
Wang \emph{et al.}~\cite{16} considered a multi-RIS-assisted mmWave MIMO system and carried out beam training designs with random beamforming and maximum likelihood (ML) estimation to estimate the parameters of the LOS component.


Recently, the importance of information freshness has been recognized and AoI has been considered in the design of wireless communication systems.
For example, He \emph{et al.}~\cite{17} designed two scheduling strategies for the system AoI minimization problem in wireless networks. One is based on the ILP formulation, and the other is the suboptimal but more scalable steepest age descent algorithm.
Kadota \emph{et al.}~\cite{18} formulated the problem of minimizing the expected weighted sum of AoI under time-throughput constraints. They designed four low-complexity scheduling strategies for solving this problem and found the Max-Weight and the Drift-Plus-Penalty have better performance in terms of AoI and throughput.
Liu \emph{et al.}~\cite{19} proved that any optimal solution of the maximum delay minimization problem is an approximate solution of the AoI minimization problem with bounded optimality loss. Inspired by this, a framework was developed to solve the AoI minimization problem in multi-path communications.
Bhat \emph{et al.}~\cite{20} considered the long-term average throughput maximization problem in fading channels, where the system average AoI and power are regarded as constraints. They proposed a simple age-independent stationary randomized power allocation policy to solve the problem.
In addition, the optimization of AoI has been extended to different application scenarios, such as multi-access edge computing-assisted IoT networks~\cite{21}, unmanned aerial vehicles (UAV) communications~\cite{22}, simultaneous wireless information and power transfer (SWIPT) enabled communications~\cite{23}, and the joint radar-communication (JRC)~\cite{24}, etc.

There were also some related works on AoI optimization in RIS-assisted wireless communications~\cite{25,26,27,28,29,30,31}.
Sorkhoh \emph{et al.}~\cite{25} studied the RIS-assisted cooperative autonomous driving (CAD) systems. They scheduled the resource blocks and RISs to minimize the average AoI of all streams.
Muhammad \emph{et al.}~\cite{26} examined the joint optimization of the RIS phase shifts and the traffic streams scheduling based on semi-definite relaxation (SDR), and solved the problem of minimizing the sum AoI in RIS-assisted wireless networks in single-antenna scenarios.
Samir \emph{et al.}~\cite{27} formulated an optimization problem with the objective of minimizing the expected sum AoI in an IoT network with the relay of a UAV equipped with RIS. To solve this problem, they developed a deep reinforcement learning (DRL) framework to jointly optimize the UAV height, RIS phase shift, and scheduling strategy.
Fan \emph{et al.}~\cite{28} deployed a RIS between IoT devices and UAVs to overcome the obstacles of urban buildings, and designs a DRL scheme to optimize UAV trajectory, discrete RIS phase shift, and scheduling strategy to minimize the total AoI of all devices.
Feng \emph{et al.}~\cite{29} adopted the DRL algorithm to jointly optimize the phase-shift matrix of RIS and service time of packets to solve the problem of minimizing the average peak information age in RIS-assisted non-orthogonal multiple access (NOMA) networks.
Lyu \emph{et al.}~\cite{30} investigated the sum AoI minimization in a RIS-assisted SWIPT network, where the energy harvesting demands of users were considered.
They proposed a successive convex approximation (SCA) based alternating optimization (AO) algorithm to handle the scheduling problem with joint active and passive beamforming design.
Shi \emph{et al.}~\cite{31} considered the average AoI minimization through the joint design of the RIS phase shifts,
transmit powers, and transmission rate in hybrid automatic repeat request (HARQ)-RIS aided
IoT networks.
However, all of these studies took the overall AoI minimization as their objective and neglected the information freshness requirements of UEs.
Moreover, the AoI optimization in RIS-assisted mmWave communications has not been studied yet.
Therefore, in this paper, we focus on sum rate maximization while satisfying the information freshness requirements of UEs in RIS-assisted mmWave MIMO systems.

\section{System Overview}\label{S3}

\subsection{System Model}\label{S3-1}
As shown in Fig.~\ref{fig:1}, we consider a single-cell mmWave MIMO communication system, where multiple UEs need to obtain fresh data from the BS.
A typical example is that UEs require real-time traffic information from the BS for trip planning.
The set of UEs is denoted as $\mathcal{K} = \{1, 2, \cdots, K\}$.
The BS is equipped with $N_t$ antennas and each UE is equipped with $N_r$ antennas.
Both the BS and the UEs use the uniform linear array (ULA) antennas.
The direct links between the BS and the UEs are assumed to be blocked by some obstacles, e.g., high buildings.
Thus, a RIS with $M$ passive reflection elements is deployed to provide a reliable reflection link for the UEs.

Moreover, time is divided into a series of non-overlapping superframes. Each superframe consists of two phases: the scheduling phase and the transmission phase. In the scheduling phase,
the scheduling and network optimization scheme is computed by a central controller located at the BS, then the results are sent to the RIS and the UEs.
In the transmission phase,
the BS communicates with the UEs with the help of RIS following the scheme.
If the system changes during the transmission phase causing a transmission failure, UEs will report the transmission failure to the BS.
The system will advance to the next superframe and the BS will redesign the scheme.

\begin{figure}[htbp]
\begin{center}
\includegraphics*[width=1.0\columnwidth,height=2.0in]{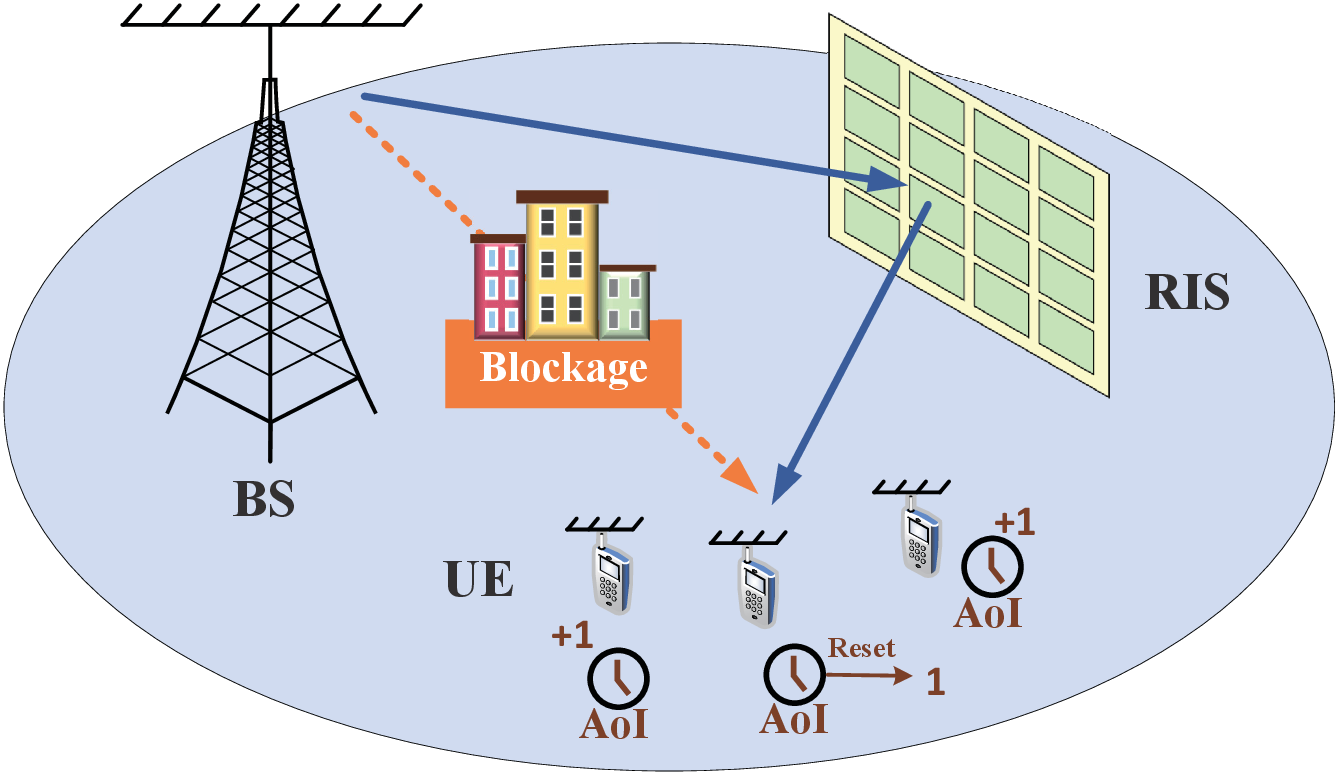}
\end{center}
\caption{An illustration of the system model.}
\label{fig:1}
\end{figure}

We focus on the performance of the system in the transmission phase. The transmission phase can be divided equally into $T$ time slots, denoted as $\mathcal{T} = \{1, 2, \cdots, T\}$. The time-division multiple access (TDMA) protocol is adopted, which means that only one UE can be scheduled for each time slot, so there is no interference between different UEs~\cite{28,31b}.
At each time slot, the BS side and the scheduled UE side perform analog beamforming to generate directional antenna gain. We assume that for each time slot, the BS transmits a signal with the same power $P_{T}$. The signal transmitted by the BS in time slot $t$ can be expressed as $\textbf{x}_t = \textbf{w}_t \sqrt{P_T} s_{t}$, where $s_{t}$ denotes the transmitted data at time slot $t$ with $\mathbb{E}\{s_{t}\}=0$ and $\mathbb{E}\{s_{t}s_{t}^H\}=1$, and $\textbf{w}_t \in \mathbb{C}^{N_t \times 1} $ denotes corresponding beamforming vector at the BS. The received signal at the scheduled UE in time slot $t$ is expressed as
\begin{equation}
y_t={\mathbf{f}_t^H} ({\mathbf{H}_t^H}\mathbf{x}_t+\mathbf{n}_t)={\mathbf{f}_t^H} ({\mathbf{H}_t^H}\mathbf{w}_t\sqrt{P_T}s_t+\mathbf{n}_t),
\label{eq1}
\end{equation}
where $\mathbf{f}_t \in \mathbb{C}^{N_r \times 1}$ denotes the beamforming vector at the UE, $\mathbf{n}_t \sim \mathcal{CN}(\mathbf{0},\sigma^2\mathbf{I}_{N_r})$ is the additive Gaussian white noise received by the UE,
and $\mathbf{H}_t \in\mathbb{C}^{N_t\times N_r}$ denotes the channel matrix in time slot $t$.

Since the direct link is blocked, the transmitted signal arrives at the UE via the BS-RIS-UE channel.
The RIS is a uniform planar array (UPA) consisting of $M$ passive reflection elements, each of which can independently adjust the amplitude and phase of the incident signal.
In view of the severe path loss, we ignore the signals reflected by the RIS twice and more and consider only the signal reflected for the first time~\cite{32}.
Let $\mathbf{G}_t\in\mathbb{C}^{N_t\times M}$ and $\mathbf{H}_{r,t} \in \mathbb{C}^{M\times N_r}$ represent the reflection channel matrixes at time slot $t$ from the BS to the RIS and from the RIS to the UE, respectively.
Thus, the channel matrix $\mathbf{H}_t$ can be expressed as
\begin{equation}
\mathbf{H}_t=\mathbf{G}_t\mathbf{\Phi}_t\mathbf{H}_{r,t}.
\label{eq2}
\end{equation}
Here, $\mathbf{\Phi}_t = \text{diag} ({\beta_{1,t}e^{j\varphi_{1,t}},\ \ \beta_{2,t}e^{j\varphi_{2,t}},\ \cdots\ ,\ \beta_{M,t}e^{j\varphi_{M,t}}}) \in \mathbb{C}^{M\times M}$ denotes the reflection-coefficient matrix of the RIS, where $\beta_{m,t} \in [0,1]$ and $\varphi_{m,t} \in [0,2\pi]$ represent the amplitude reflection coefficient and phase-shift reflection coefficient of RIS element $m$ in time slot $t$, respectively.
For simplicity, each reflection element of RIS is designed to maximize signal reflection (i.e., $\beta_{m,t}=1,\ \forall m,t)$~\cite{33,33b}.
Further, for the sake of hardware implementation, the phase shift of RIS takes finite discrete values.
We assume that each RIS element can realize $2^b$ different discrete phase shift values by $b$-bit quantization, the set of discrete phase shifts is represented as
$\mathcal{F}=\left\{0,\frac{2\pi}{2^{b}},\cdots,\left(2^b-1\right)\frac{2\pi}{2^{b}-1}\right\}$~\cite{07}.

Accordingly, the SNR received by the UE scheduled in time slot $t$ is given by
\begin{equation}
\gamma_t=\frac{\left|{\mathbf{f}_t^H}{(\mathbf{G}_t\mathbf{\Phi}_t\mathbf{H}_{r,t})}^H\mathbf{w}_t\sqrt{P_T}\right|^2}{\sigma^2}.
\label{eq3}
\end{equation}
To ensure that the UE can correctly demodulate the signal, the SNR should be greater than a threshold value $\gamma_{th}$, i.e., $\gamma_t>\gamma_{th}$.
Then, the achievable transmission rate in time slot $t$ can be written as
\begin{equation}
R_t=\log_2{\left(1+\gamma_t\right)}.
\label{eq4}
\end{equation}

However, it is worth noting that obtaining the full CSI by channel estimation is difficult in this system~\cite{09}.
Moreover, the RIS in the system is a passive device with no RF chain. Therefore, it cannot receive, transmit, and process signals other than just reflecting signals. It cannot directly estimate the BS-RIS channel and the RIS-UE channel. On the other hand, the large antenna array and the large number of passive reflection elements of RIS impose a substantial overhead on channel estimation. This is very detrimental to the design and optimization of the system. Therefore, we adopt the beam training method in the scheduling phase to obtain the AoD and AoA associated with
the reflection path, instead of explicitly estimating
the entire channel.
Specially,
the beam search space is represented by
a codebook containing multiple codewords.
We denote the codebook of the BS and each UE as $\Gamma_t$ and $\Gamma_r$, respectively.
Thus, we have ${{{\bf{w}}_t}} \in \Gamma_t$ and ${{{\bf{f}}_t}} \in \Gamma_r$ for $\forall t$.
In the scheduling phase, the BS consecutively sends beam training signals to each UE through the reflection of the RIS.
Both the BS and each UE can sweep the beamforming vectors in the pre-designed codebook, while the different phase shift of each RIS element is selected from $\mathcal{F}$ to change the reflection beam direction.
Then,
based on the UE's feedback, the combination of beamforming vectors and RIS reflection coefficients that maximizes the UE's achievable transmission rate will be selected.

\subsection{Channel Model}\label{S3-3}
\begin{figure}[t]
\begin{center}
\includegraphics*[width=1.0\columnwidth,height=2.1in]{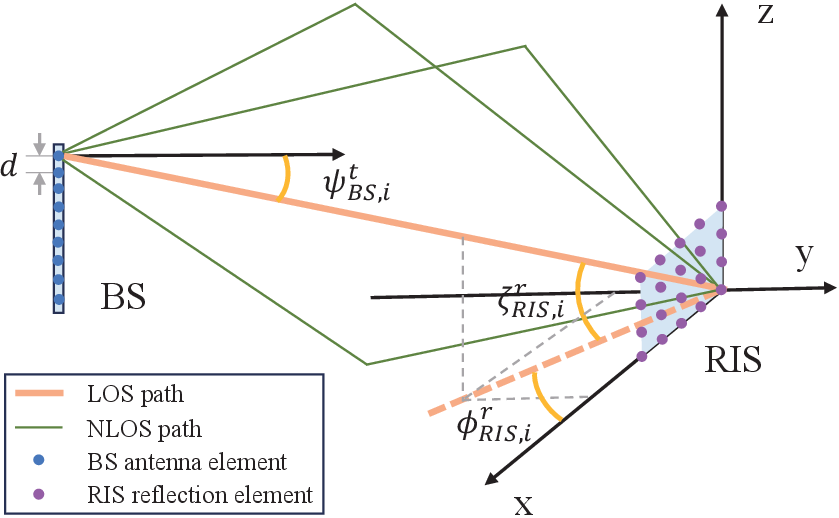}
\end{center}
\caption{An illustration of the BS-RIS channel model.}
\label{fig:channel}
\end{figure}
Due to the small wavelength, mmWave signals exhibit weak diffraction capabilities and suffer from high path loss, which makes the mmWave channel have limited scattering. The number of scatterers is typically substantially fewer than the number of antennas at the transceiver. Moreover, the dense configurations of antenna arrays in mmWave transceivers introduce pronounced antenna correlation. Given this, the Saleh-Valenzuela (S-V) channel model has been extensively used the capture the mathematical attributes of the mmWave channels~\cite{35}. In Fig.~\ref{fig:channel}, we present a schematic diagram of the BS-RIS channel based on the S-V channel model. The channel matrix between transceivers can be portrayed as a superposition of multipath components, where different multipath components have different angles of separation (AoDs) and angles of arrival (AoAs).

Assume that the channels do not change within a superframe.
In each time slot $t$, the BS-RIS channel $\mathbf{G}_t$ and the RIS-UE channel $\mathbf{H}_{r,t}$ can be written as
\begin{equation}
\mathbf{G}_t=\sqrt{\frac{N_tM}{P}}\mathop \sum \limits_{i = 1}^P {\tilde \alpha_i}\mathbf{a}_{r}\left(M, \phi^r_{RIS,i},\zeta^r_{RIS,i}\right)\mathbf{a}_{t}^H\left(N_t, \psi^t_{BS,i}\right),
\label{eq5}
\end{equation}
\begin{equation}
\mathbf{H}_{r,t}=\sqrt{\frac{MN_r}{L}}\mathop \sum \limits_{i = 1}^L {\tilde \beta_i}\mathbf{a}_{r}\left(N_r, \psi^r_{UE,i}\right)\mathbf{a}_{t}^H\left(M, \phi^t_{RIS,i},\zeta^t_{RIS,i}\right),
\label{eq6}
\end{equation}
where $P$ is the total number of paths between the BS and the RIS, $L$ is the total number of paths between the RIS and the UE scheduled in time slot $t$.
$\tilde \alpha_i$ and $\tilde \beta_i$ denote the complex gain of the $i$-th path.
$\phi^r_{RIS,i}$ and $\zeta^r_{RIS,i}$ represent the azimuth and elevation angles of arrival associated with the RIS, respectively,
while $\phi^t_{RIS,i}$ and $\zeta^t_{RIS,i}$ represent the azimuth and elevation angles of departure associated with the RIS, respectively.
$\psi^t_{BS,i}$ denotes the angle of departure from the BS and $\psi^r_{UE,i}$ denotes the angle of arrival to the scheduled UE.
$\mathbf{a}_r(\cdot)$ and $\mathbf{a}_t(\cdot)$ denote the normalized angle steering vector functions at transmitter and receiver, respectively.
Specifically, for the BS and UEs with an $N$-element ULA, the corresponding angle steering vector is expressed as
\begin{equation}
\mathbf{a}(N,\psi)={\frac{1}{\sqrt{N}}}[1,e^{j{\frac{2 \pi d}{\lambda}}\sin(\psi)}, \cdots ,e^{j{\frac{2 \pi d}{\lambda}}(N-1)\sin(\psi)}],
\label{eq7}
\end{equation}
and for the RIS with the UPA with $M=M_a \times M_b$ reflection elements, the corresponding normalized angle steering vector is expressed as
\begin{equation}
\begin{aligned}
& \mathbf{a}\!\left(M,\phi ,\zeta  \right)\!=\!\frac{1}{\sqrt{M}}[1,\cdots ,{{\text{e}}^{j\frac{2\pi d}{\lambda }\left( \left( {{m}_{a}}\!-\!1 \right)\sin\left( \zeta  \right)\sin\left( \phi  \right)+\left( {{m}_{b}}\!-\!1 \right)\cos\left( \zeta  \right) \right)}} \\
& \cdots,\ {{\text{e}}^{j\frac{2\pi d}{\lambda }\left( \left( {{M}_{a}}-1 \right)\sin\left( \zeta  \right)\sin\left( \phi  \right)+\left( {{M}_{b}}-1 \right)\cos\left( \zeta  \right) \right)}}].
\end{aligned}
\label{eq8}
\end{equation}


\subsection{AoI Definition}\label{S3-4}
We use the AoI to measure the freshness of information received by UEs.
The BS is assumed to follow a \emph{per time slot sampling strategy}, i.e., it samples status-update information and sends a status-update packet at the beginning of each time slot~\cite{27}.
Meanwhile, a single packet queue discipline is considered for the BS, which means the older status-update packet will be replaced by a newly arrived packet.
We use $u_{k,t} \in \{0,1\} $ to indicate whether UE $k$ is scheduled to receive data from the BS in time slot $t$. If UE $k$ is scheduled, $u_ {k,t}=1$, and then the BS sends a status-update packet to UE $k$; otherwise, $u_ {k,t}=0$.
The overall scheduling strategy in time slot $t$ is expressed as
$\mathbf{u}_t=\left[u_ {1,t},u_ {2,t},\ldots,u_ {K,t}\right]^T$.
Note that in addition to being scheduled, the successful transmission of the status-update information to UE $k$ requires that the SNR exceeds the threshold for reliable demodulation.
In case the status-update packet is successfully transmitted to UE $k$ in time slot $t$, the AoI of UE $k$ is reset to 1, otherwise, the AoI is increased by 1.
Therefore, the evolution of the AoI of UE $k$ is given by
\begin{equation}
\mathcal{A}_{k,t} =
\begin{cases}
1, & \text{if $u_{k,t}=1$ and $\gamma_t>\gamma_{th}$,} \\
\mathcal{A}_{k,t-1}+1 , & \text{otherwise.}
 \end{cases}
 \label{eq9}
 \end{equation}

For simplicity, we assume that the initial AoI $\mathcal{A}_{k,0} = 1,\forall k$.
The average AoI of UE $k$ during the transmission phase consisting of $T$ time slots is given by
\begin{equation}
\mathcal{A}_k=\frac{1}{T}\sum_{t=1}^{T}\mathcal{A}_{k,t}.
\label{eq10}
\end{equation}

Considering the requirement of each UE for fresh information, we denote the maximum tolerable AoI for UE $k$ as $\mathcal{A}_{k,\text{max}}$. The AoI of each UE should satisfy
$\mathcal{A}_{k} \le \mathcal{A}_{k,\text{max}}, \forall k$.
In this paper, we focus on the situation in which each UE receives the same types of service from the BS.
Generally, we assume $\mathcal{A}_{k,\text{max}}=\mathcal{A}_{\text{max}}, \forall k$.

\section{Problem Formulation and Decomposition}\label{S4}
In this section, we first formulate our optimization problem
based on the above system model, and then decompose the complex problem in order to efficiently solve it.
\subsection{Sum Rate Maximization Problem Formulation}\label{S4-1}
In this paper, we aim to maximize the sum rate of the system over $T$ time slots
by jointly optimizing the scheduling strategy, the reflection-coefficient matrix of the RIS, and the beamforming vector of the BS and UEs.
To facilitate the subsequent presentation,
let $\mathbf{\Phi}=\left[\mathbf{\Phi}_1,\mathbf{\Phi}_2,\ldots,\mathbf{\Phi}_T\right]^T$ denote the RIS reflection coefficient matrix over $T$ time slots,
$\mathbf{W}=\left[\mathbf{w}_1,\mathbf{w}_2,\ldots,\mathbf{w}_T\right]^T$
and $\mathbf{F}=\left[\mathbf{f}_1,\mathbf{f}_2,\ldots,\mathbf{f}_T\right]^T$
denote the beamforming vector of the BS and UEs over $T$ time slots, respectively,
and $\mathbf{U} = \left[\mathbf{u}_1,\mathbf{u}_2,\ldots,\mathbf{u}_T\right]^T$ represent the scheduling strategy over $T$ time slots.
Then,
the joint optimization problem (P1) can be formulated as
\begin{align}
& \mathop {\max }\limits_{{\bf{U}},{\bf{W}},{\bf{\Phi }},{\bf{F}}} \;\mathop \sum \limits_{t = 1}^T R_t  \label{eq11} \\
s.t.  &\;\; \mathop \sum \limits_{k = 1}^K {u_{k,t}} = 1,\;\forall t \in \left\{ {1,2, \cdots ,T} \right\}, \label{eq12} \\
&\;\; \mathcal{A}_k \le \mathcal{A}_{k,\text{max}},\; \forall k \in \left\{ {1,2, \cdots ,K} \right\}, \label{eq13}\\
&\;\; {u_{k,t}} \in \left\{ {0,1} \right\},\forall k \in \left\{ {1,2, \cdots ,K} \right\},t \in \left\{ {1,2, \cdots ,T} \right\}, \label{eq14}\\
&\;\; {\varphi _{m,t}} \in {\cal F},\;\forall m \in \left\{ {1,2, \cdots ,M} \right\},t \in \left\{ {1,2, \cdots ,T} \right\}, \label{eq15} \\
&\;\; {{{\bf{w}}_t}} \in \Gamma_t,\;\forall t \in \left\{ {1,2, \cdots ,T} \right\}, \label{eq16} \\
&\;\; {{{\bf{f}}_t}} \in \Gamma_r,\;\forall t \in \left\{ {1,2, \cdots ,T} \right\}, \label{eq17}
\end{align}
where Constraint~(\ref{eq12}) indicates that only one UE is scheduled in each time slot, and Constraint~(\ref{eq13})
guarantees that the information freshness of each UE is ensured. Constraint~(\ref{eq14}) limits scheduling variables $u_{k,t}$ to 0-1 variables, and Constraints~(\ref{eq15}-\ref{eq17}) restrict $\varphi _{m,t}$, ${\bf{w}}_t$, and ${\bf{f}}_t$ to be discrete values.
The problem is an integer non-convex optimization problem.
Moreover, the four variables, $\mathbf{U}$, $\mathbf{W}$, $\mathbf{\Phi}$, and $\mathbf{F}$, are coupled in both the objective function and $\mathcal{A}_k$.
Although the global optimal solution can be found by exhaustive search, the multi-variable coupling makes the search space prohibitively large and consequently, the computational overhead considerable.
Facing these challenges, our goal is to design a low-complexity algorithm to solve this problem.

\subsection{Problem Decomposition}\label{S4-2}
Considering the multi-variable coupling, we first decompose the problem. First, it can be noted that only one UE is scheduled in each time slot in the TDMA system. Let $\mathbf{H}_{r,k,t}$ denote the channel between RIS and UE $k$ in time slot $t$. The corresponding transmit beamforming vector, the receive beamforming vector, and the RIS phase shift matrix are represented as $\mathbf{w}_{k,t}$, $\mathbf{f}_{k,t}$, and $\mathbf{\Phi}_{k,t}$, respectively. The sum rate can be accordingly rewritten as
\begin{align}
\mathop \sum \limits_{t = 1}^T \! R_t
&\!=\! \mathop \sum \limits_{t = 1}^T \! \mathop \sum \limits_{k = 1}^K \! u_{k\!,t} {\log _2} \! \left( \! {1 \! + \! \frac{\left|{\mathbf{f}_{k\!,t}^H}{(\mathbf{G}_t\mathbf{\Phi}_{k\!,t}\mathbf{H}_{r\!,k\!,t})}^H\mathbf{w}_{k\!,t}\sqrt{P_T}\right|^2}{\sigma^2}}\! \right) \!
\label{eq:rde}
\end{align}
Then, we assume that the channels are quasi-static and do not change over $T$ time slots, so we have $\mathbf{G}_t = \mathbf{G}, \mathbf{H}_{r,k,t} = \mathbf{H}_{r,k}, \;\forall t \in \left\{ {1,2, \cdots ,T} \right\}$. In this case, the beamforming vectors and the RIS phase shifts for UE $k$ can be simplified to be consistent in different time slots, which can be represented as $\mathbf{w}_{k,t} = \mathbf{w}_{k}$, $\mathbf{f}_{k,t} = \mathbf{f}_{k}$, and $\mathbf{\Phi}_{k,t} = \mathbf{\Phi}_{k}, \;\forall t \in \left\{ {1,2, \cdots ,T} \right\}$.
Accordingly, the sum rate can be further rewritten as
\begin{align}
\mathop \sum \limits_{t = 1}^T R_t \!=\! \mathop \sum \limits_{t = 1}^T \mathop \sum \limits_{k = 1}^K u_{k,t} {\log _2} \left( {1 + \frac{\left|{\mathbf{f}_k^H}{(\mathbf{G}\mathbf{\Phi}_k\mathbf{H}_{r,k})}^H\mathbf{w}_k\sqrt{P_T}\right|^2}{\sigma^2} }\right)
\end{align}
Let $R_k = {\log _2} \left( {1 + \frac{\left|{\mathbf{f}_k^H}{(\mathbf{G}\mathbf{\Phi}_k\mathbf{H}_{r,k})}^H\mathbf{w}_k\sqrt{P_T}\right|^2}{\sigma^2} }\right)$ represent the transmission rate of UE $k$.
Thus, P1 can be rewritten as
\begin{align}
& \mathop {\max }\limits_{{\bf{U}}; {{\bf{w}}_k},{{\bf{\Phi }}_k},{{\bf{f}}_k}, \forall k }\;\mathop \sum \limits_{t = 1}^T \mathop \sum \limits_{k = 1}^K u_{k,t} R_k  \label{eq11b} \\
s.t.  &\;\; \mathop \sum \limits_{k = 1}^K {u_{k,t}} = 1,\;\forall t \in \left\{ {1,2, \cdots ,T} \right\}, \label{eq12b} \\
&\;\; \mathcal{A}_k \le \mathcal{A}_{k,\text{max}},\; \forall k \in \left\{ {1,2, \cdots ,K} \right\}, \label{eq13b}\\
&\;\; {u_{k,t}} \in \left\{ {0,1} \right\},\forall k \in \left\{ {1,2, \cdots ,K} \right\},t \in \left\{ {1,2, \cdots ,T} \right\}, \label{eq14b}\\
&\;\; {\varphi _{m,k}} \in {\cal F},\;\forall m \in \left\{ {1,2, \cdots ,M}\right\}, \forall k \in \left\{ {1,2, \cdots ,K} \right\},  \label{eq15b} \\
&\;\; {{{\bf{w}}_k}} \in \Gamma_t, \forall k \in \left\{ {1,2, \cdots ,K} \right\},\label{eq16b} \\
&\;\; {{{\bf{f}}_k}} \in \Gamma_r, \forall k \in \left\{ {1,2, \cdots ,K} \right\}. \label{eq17b}
\end{align}

Note that the sum rate can also be converted as $\mathop \sum \limits_{t = 1}^T \mathop \sum \limits_{k = 1}^K u_{k,t} R_k = \mathop \sum \limits_{k = 1}^K (\mathop \sum \limits_{t = 1}^T  u_{k,t}) R_k$, where $\mathop \sum \limits_{t = 1}^T  u_{k,t} \geq 0, \forall k \in \left\{ {1,2, \cdots ,K}\right\}$. Since the transmission rate $R_k$ is related only to ${\bf{w}}_k$, ${{\bf{\Phi }}_k}$ and ${{\bf{f}}_k}$ and not to ${\bf{U}}$, and the transmission rates of different UEs are independent of each other, we can decompose P1 into $K$ \emph{per-UE rate maximization problems} and a \emph{scheduling strategy design problem}.
\smallskip
\subsubsection{\textbf{Per-UE rate maximization problem}}
This subproblem aims to maximize the achievable transmission rate of each UE through the joint optimization of beamforming vectors and the reflection-coefficient matrix of the RIS. We denote the achievable transmission rate of UE $k$ as $R_k$, and the subproblem can be written as
\begin{align}
& \mathop {\max }\limits_{{{\bf{w}}_k},{{\bf{\Phi }}_k},{{\bf{f}}_k}} \;{R_k} = {\log _2}\left( {1 + \frac{{{{\left| {{{\bf{f}}_k}^H{{\bf{H}}_k}^H{{\bf{w}}_k}\sqrt {{P_T}} } \right|}^2}}}{{{\sigma ^2}}}} \right)  \label{eq18} \\
s.t.
&\qquad\qquad {\varphi _{m,k}} \in {\cal F},\;\forall m \in \left\{ {1,2, \cdots ,M} \right\}, \label{eq19} \\
&\qquad\qquad {{{\bf{w}}_k}} \in \Gamma_t, \label{eq20} \\
&\qquad\qquad {{{\bf{f}}_k}} \in \Gamma_r. \label{eq21}
\end{align}
In this subproblem, variables $\mathbf{w}_k$, $\mathbf{\Phi}_k$, and $\mathbf{f}_k$ are still coupled, so we further decompose the subproblem into a \emph{beamforming optimization subproblem} and a \emph{RIS reflection coefficient optimization subproblem}.

For the \emph{beamforming optimization subproblem}, we assume that the reflection coefficient matrix of the RIS ${\bf{\Phi }}_k$ is fixed and maximize the transmission rate of each UE by optimizing the beamforming vectors $\mathbf{w}_k$ and $\mathbf{f}_k$. We can write this subproblem as
\begin{align}
& \mathop {\max }\limits_{{{\bf{w}}_k},{{\bf{f}}_k}} \;{R_k} = {\log _2}\left( {1 + \frac{{{{\left| {{{\bf{f}}_k}^H{{\bf{H}}_k}^H{{\bf{w}}_k}\sqrt {{P_T}} } \right|}^2}}}{{{\sigma ^2}}}} \right)  \label{eq22} \\
s.t.
&\quad\quad\;\;\; {{{\bf{w}}_k}} \in \Gamma_t, \label{eq23} \\
&\quad\quad\;\;\; {{{\bf{f}}_k}} \in \Gamma_r. \label{eq24}
\end{align}

For the \emph{RIS reflection coefficient optimization subproblem}, we fix the beamforming vectors and find the efficient reflection coefficient matrix ${\bf{\Phi }}_k$ for the RIS. We can write this subproblem as
\begin{align}
& \mathop {\max }\limits_{{{\bf{\Phi }}_k}} \;{R_k} = {\log _2}\left( {1 + \frac{{{{\left| {{{\bf{f}}_k}^H{{\bf{H}}_k}^H{{\bf{w}}_k}\sqrt {{P_T}} } \right|}^2}}}{{{\sigma ^2}}}} \right)  \label{eq25} \\
s.t.
&\;\qquad\; {\varphi _{m,k}} \in {\cal F},\;\forall m \in \left\{ {1,2, \cdots ,M} \right\}. \label{eq26}
\end{align}

\smallskip
\subsubsection{\textbf{Scheduling strategy design problem}}
Based on the maximum achievable transmission rate of each UE, this subproblem is to design the scheduling strategy to maximize the total transmission rate over $T$ time slots.
The subproblem can be written as
\begin{align}
& \mathop {\max }\limits_{\mathbf{U}}\sum_{t=1}^{T}\sum_{k=1}^{K}{u_{k,t}R_k}  \label{eq27} \\
s.t. &\;\; \mathop \sum \limits_{k = 1}^K {u_{k,t}} = 1,\;\forall t \in \left\{ {1,2, \cdots ,T} \right\}, \label{eq28} \\
&\;\; \mathcal{A}_k \le \mathcal{A}_{k,\text{max}},\; \forall k \in \left\{ {1,2, \cdots ,K} \right\}, \label{eq29}\\
&\;\; {u_{k,t}} \in \left\{ {0,1} \right\},\forall k \in \left\{ {1,2, \cdots ,K} \right\},t \in \left\{ {1,2, \cdots ,T} \right\}. \label{eq30}
\end{align}

In the next section, we will develop effective algorithms to solve the two decomposed problems and achieve the goal of sum rate maximization.

\section{Sum rate maximization}\label{S5}
In this section, we aim to propose a low-complexity algorithm to solve P1. Based on the decomposition of P1, the proposed solution consists of three parts: First, in Section V-A, we design a block coordinate descent (BCD)-based algorithm to solve the \emph{per-UE rate maximization problem}, which solves the \emph{beamforming optimization subproblem} and the \emph{RIS reflection coefficient optimization subproblem} iteratively until the algorithm converges. Then, in Section V-B, we propose a heuristic scheduling algorithm to solve the \emph{scheduling strategy design problem}. Finally, in Section V-C, we show the overall sum rate maximization algorithm for solving P1. The convergence and complexity analyses are given in Section V-D.

\subsection{Per-UE Rate Maximization}\label{S5-1}
To solve the \emph{per-UE rate maximization problem}, we first design algorithms to solve the \emph{beamforming optimization subproblem} and the \emph{RIS reflection coefficient optimization subproblem}. Then we use the BCD algorithm to obtain the overall suboptimal solution. It is worth noting that we consider the difficulty of channel estimation in the RIS-assisted mmWave MIMO system. Thus, different from most existing studies adopting the BCD-based method~\cite{07,34}, we design the BCD algorithm in the case of unknown CSI.

\smallskip
\subsubsection{\textbf{Beamforming Optimization}}

Given the codebook of the BS and UE, although exhaustively searching all the transmit-receive beam pairs in the codebooks can find an efficient beam pair, we choose the hierarchical search method for reduced complexity.
Specifically, we first design multilevel codebooks with different beam widths and then perform a divide-and-conquer search on the different codebook levels. The hierarchical search shows a tree structure, thereby substantially enhancing the search efficiency.
The details of this method are given by Algorithm~\ref{alg.a}.

First, we focus on the design of the hierarchical codebooks $\Gamma_t$ and $\Gamma_r$.
There are two criteria to design a hierarchical codebook~\cite{36}, which are given as follows.
\begin{itemize}
  \item Within each layer, the aggregate beam coverage of all codewords should span the entirety of the angular domain, which ensures no miss of any angle during the beam search.
  \item The beam coverage of an arbitrary codeword within a layer should be given by the union of those of several adjacent codewords in the next layer, which establishes a tree-fashion relationship between the codewords.
\end{itemize}

In this paper, we assume that each parent codeword has 2 child codewords, thus forming a binary-tree codebook structure. Fig.~\ref{fig:2} shows a three-layer binary-tree codebook structure diagram, where $\mathbf{w}(l,n)$ denotes the $n$-th codeword of the $l$-th layer codebook. For antenna arrays with $N$ antenna elements, we assume that there are $N$ codewords covering the angle range [-1,1] in the last layer and each codeword has beam width $2/N$ with different steering angles. Therefore, the codebook consists of $\log_2(N)+1$ layers, where the $k$-th layer consists of $2^k$ codewords with beam width $2/2^k$ for each codeword.

For the design of $N$ codewords in the last layer,
since the steering vector $\mathbf{a}(N,\psi)$ in~(\ref{eq7}) can be defined as having a $2/N$ beam width centered on the steering angle $\psi$,
we adopt the steering vectors with $N$ angles evenly sampled within [-1,1]~\cite{36}. The $n$-th codeword exhibits the maximal beam gain along the angle $-1+\frac{2n-1}{N}$. Thus, we have $\mathbf{w}(\log_2(N),n)= \mathbf{a}(N,-1+\frac{2n-1}{N}),\; n=1,2,\cdots,N$.
Then, for the design of codewords in the other layers, we use the joint sub-array and deactivation approach~\cite{36}. Specifically, to broaden the beam, we divide the $N$-antenna array into $Q$ sub-arrays. Each sub-array is equipped with $N_S$ antennas.
Taking the first codeword of each layer (i.e., $\mathbf{w}(l,1)$) as an example, the codeword of the $q$-th sub-array can be represented as $\mathbf{w}_q = [\mathbf{w}(l,1)]_{(q-1)N_S+1:qN_S}$. Among these sub-arrays, the number of the activated sub-arrays is denoted by $N_A$. Since the beams of these activated sub-arrays are pointed in sufficiently spaced directions, they can be aggregated into wider beams.
We define the codeword of the $q$-th activated sub-array as $\mathbf{w}_q = {e}^{j\theta_q} \mathbf{a}(N_S,-1+\frac{2q-1}{N_S})$ with ${e}^{j\theta_q}$ representing a scalar coefficient with the unit norm for the $q$-th sub-array. To reduce beam fluctuations, the intersection points between each sub-array coverage area are required to have high beam gain, which is modeled as the problem (27) in~\cite{36}. Based on the solution for the problem, we have $\theta_q = -q\frac{{{N}_{S}}-1}{{{N}_{S}}}\pi$. For the deactivated sub-arrays, the antennas in these sub-arrays are turned off, i.e., $\mathbf{w}_q = \mathbf{0}_{N_S\times 1}$. Thus, the codeword of each sub-array can be given by
\begin{equation}
\mathbf{w}_q =
\begin{cases}
&{{e}^{-jq\frac{{{N}_{S}}-1}{{{N}_{S}}}\pi }}\mathbf{a}\left( {{N}_{S}},-1+\frac{2q-1}{{{N}_{S}}} \right),\; q=1,2,...,{{N}_{A}} \\
&{\mathbf{0}_{N_S\times 1}}, \;q={{N}_{A}}+1,{{N}_{A}}+2,...,Q,
\end{cases}
 \label{eq31}
\end{equation}
and the beam width of the sub-array codeword is $\frac {2N_A}{N_S}$. In addition, according to Corollary 1 in~\cite{36}, after obtaining the first codeword of
each layer (i.e., $\mathbf{w}(l,1)$), we can obtain all the other codewords in the same layer through rotating $\mathbf{w}(l,1)$ by $\frac{2(n-1)}{2^l}$, $n=2,3,...,{2}^{l}$, respectively. The beam rotation can be realized by
\begin{equation}
\mathbf{w}(l,n) = \mathbf{w}(l,1)\circ  \sqrt{N}\mathbf{a}(N,\frac{2(n-1)}{2^l}),n=2,3,\cdots,2^l
 \label{eq31a}
\end{equation}
where $\circ$ represents entry-wise product.

\begin{algorithm}[htbp]
\small
\caption{\small Hierarchical Search Method for Beamforming Optimization}	
\label{alg.a}
\begin{algorithmic}[1]
\renewcommand{\algorithmicrequire}{ \textbf{Input:}}
\REQUIRE{$N_t$; $N_r$; the designed hierarchical codebooks $\Gamma_t$ and $\Gamma_r$}
\renewcommand{\algorithmicrequire}{ \textbf{Output:}}
\REQUIRE{$\mathbf{w}_k$; $\mathbf{f}_k$}
\renewcommand{\algorithmicrequire}{ \textbf{Initialization:}}
\REQUIRE{$cw_t=cw_r=0$}
\STATE Fix the BS to be in an omni-directional mode;
\FOR{each layer $l_r$ in $\Gamma_r$}
\STATE Compare the data rate of the $(2cw_r-1)$-th codeword with that of the $(2cw_r)$-th codeword and record the index of the codeword with a higher data rate as $cw_r^*$;
\STATE $cw_r = cw_r^*$;
\ENDFOR
\STATE $\mathbf{f}_k = \Gamma_r(L_r,cw_r)$;
\STATE Fix the UE to the directional mode with $\mathbf{f}_k$;
\FOR{each layer $l_t$ in $\Gamma_t$}
\STATE Compare the data rate of the $(2cw_t-1)$-th codeword with that of the $(2cw_t)$-th codeword and record the index of the codeword with a higher data rate as $cw_t^*$;
\STATE $cw_t = cw_t^*$;
\ENDFOR
\STATE $\mathbf{w}_k = \Gamma_t(L_t,cw_t)$;
\end{algorithmic}
\end{algorithm}

The details of the codebook design are presented in Algorithm~\ref{alg.b}.
First, for the last layer of the codebook, the steering vectors with $N$ angles evenly sampled within [-1,1] are used for codewords as in lines 2-3. Next, the joint sub-array and deactivation approach is adopted to generate the codewords for other layers as in lines 5-14. We first separate $\mathbf{w}(l,1)$ into $Q = 2^{\lfloor(p+1)/2 \rfloor}$ sub-arrays with $ p = \log_2(N)-l$ in lines 5-6 and determine whether to activate half or all of the sub-arrays based on the parity of $p$ in lines 7-11. Then, in line 12, the codebook for each sub-array can be obtained by~(\ref{eq31}) and we can get $\mathbf{w}(l,1)$ accordingly. In the end, based on $\mathbf{w}(l,1)$, we can derive all the other codewords in each layer by~(\ref{eq31a}) in line 13. The hierarchical codebook design is finished after normalizing $\mathbf{w}(l,n)$ in line 14.

\begin{algorithm}[htbp]
\small
\caption{\small Hierarchical Codebook Design }	
\label{alg.b}
\begin{algorithmic}[1]
\FOR{each layer $l$}
\IF{$l=\log_2(N)$}
\STATE $\mathbf{w}(l,n)= \mathbf{a}(N,-1+\frac{2n-1}{N}),\; n=1,2,\cdots,N$;
\ELSE
\STATE $ p = \log_2(N)-l$;
\STATE Separate $\mathbf{w}(l,1)$ into $Q = 2^{\lfloor(p+1)/2 \rfloor}$ sub-arrays with $\mathbf{w}_q = [\mathbf{w}(l,1)]_{(q-1)N_S+1:qN_S},\; q=1,2,\cdots,Q$;
\IF{$p$ is odd}
\STATE $N_A = Q/2$;
\ELSE
\STATE $N_A = Q$;
\ENDIF
\STATE Calculate $\mathbf{w}_q$ according to~(\ref{eq31}) for $q=1,2,\cdots,Q$ and obtain $\mathbf{w}(l,1)$;
\STATE Obtain all the other codewords in layer $l$ through~(\ref{eq31a});
\STATE Normalize $\mathbf{w}(l,n)$;
\ENDIF
\ENDFOR
\end{algorithmic}
\end{algorithm}

After obtaining the codebooks $\Gamma_t$ and $\Gamma_r$, we first fix the BS in the omnidirectional mode and perform a binary tree search in $\Gamma_r$ to find the efficiently received codeword for the UE, as in lines 1-6 in Algorithm~\ref{alg.a}. Specifically, in each layer, we select the codeword with a higher data rate, and the two adjacent codewords in the next-layer codebook within the beam coverage of this codeword are used as candidate codewords for the choice of the next layer.
Then we fix the UE in the directional mode corresponding to the codeword and perform the same binary tree search in $\Gamma_t$ to find the efficient transmit codeword for the BS in lines 7-12.

\begin{figure}[!t]
\centering
\includegraphics*[width=3.2in]{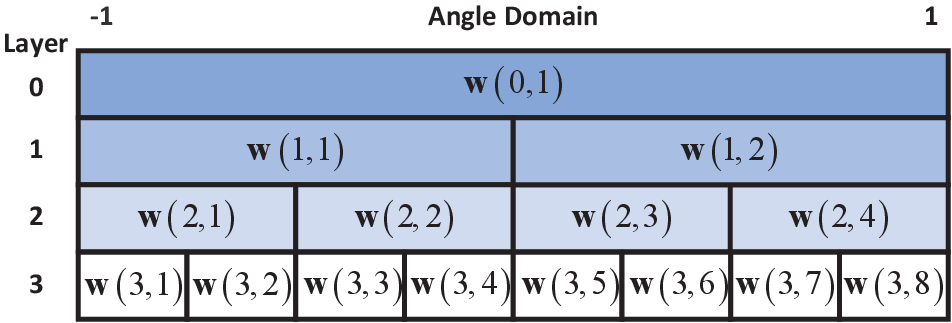}
\centering
\caption{Beam coverage of a 3-layer codebook.}
\label{fig:2}
\end{figure}

\smallskip
\subsubsection{\textbf{RIS Reflection Coefficient Optimization}}
For the RIS reflection coefficient optimization subproblem, we need to select the appropriate phase shift for each RIS element from a finite set of discrete phase shifts. Considering the complexity, we will use the local search method to solve the subproblem as shown in Algorithm~\ref{alg.c}.
Specifically, we optimize each RIS element successively while keeping the phase shifts of the remaining $M-1$ elements fixed.
For each element, we traverse all the possible phase shifts and select the phase shift giving the maximum UE transmission rate as the optimized phase shift for the element.
Then we use it for the phase shift optimization of other RIS elements until all the phase shifts are optimized.

\begin{algorithm}[htbp]
\small
\caption{\small Local Search Method for RIS Reflection Coefficient Optimization}	
\label{alg.c}
\begin{algorithmic}[1]
\renewcommand{\algorithmicrequire}{ \textbf{Input:}}
\REQUIRE{$M$; $b$}
\renewcommand{\algorithmicrequire}{ \textbf{Output:}}
\REQUIRE{$\mathbf{\Phi}_k$}
\FOR{$m=1:M$}
\STATE $R_k^*=0$;
\FOR{$p_s=1:2^b$}
\STATE Update $\mathbf{\Phi}_k$ with $\varphi _{m,k} = (p_s-1) \frac{2\pi}{2^b-1}$;
\STATE Obtain the transmission rate $R_k$;
\IF{$R_k>R_k^*$}
\STATE $R_k^* = R_k$, $\varphi _{m,k}^* = \varphi _{m,k}$;
\ENDIF
\ENDFOR
\STATE Update $\mathbf{\Phi}_k$ with $\varphi_{m,k}^*$;
\ENDFOR
\end{algorithmic}
\end{algorithm}

\smallskip
\subsubsection{\textbf{Joint Optimization for Per-UE Rate Maximization}}
In order to solve the per-UE rate maximization problem, we apply the BCD method to alternately optimize the beamforming vectors and the RIS phase shift matrix.
Specifically, as Algorithm~\ref{alg.d} shows, we randomly initialize the beamforming vectors and the RIS phase shift matrix in the beginning.
In each iteration, we first fix the RIS phase shift matrix to the last updated value and use Algorithm~\ref{alg.a} to update beamforming vectors.
If the data rate of UE $k$ after the beamforming update is more than that before this update, the results of this update are retained; otherwise, the beamforming vectors are not updated.
Then, we update the RIS phase shift matrix based on  Algorithm~\ref{alg.c} with updated beamforming vectors.
If the ratio of the difference in the data rate between two consecutive iterations is less than a certain threshold, i.e., ${|R_k^{\tau}-R_k^{\tau-1}|}/{R_k^{\tau-1}}<\delta$, we consider the algorithm has converged.

\begin{algorithm}[htbp]
\setstretch{1.2}
\small
\caption{\small BCD Method for Joint Optimization of Beamforming and RIS Phase Shift}	
\label{alg.d}
\begin{algorithmic}[1]
\renewcommand{\algorithmicrequire}{ \textbf{Input:}}
\REQUIRE{$M$; $b$; $N_t$; $N_r$; the designed hierarchical codebooks $\Gamma_t$ and $\Gamma_r$}
\renewcommand{\algorithmicrequire}{ \textbf{Output:}}
\REQUIRE{$\mathbf{\Phi}_k^*$; $\mathbf{w}_k^*$; $\mathbf{f}_k^*$; $R_k^*$; $\gamma_k^*$}
\renewcommand{\algorithmicrequire}{ \textbf{Initialization:}}
\REQUIRE{$\tau=0$; $R_k^0=R_{bf}=0$; $\delta=3\times10^{-3}$; randomly generate $\mathbf{\Phi}_k^0$, $\mathbf{w}_k^0$ and $\mathbf{f}_k^0$; $\mathbf{\Phi}_k^* = \mathbf{\Phi}_k^0$; $\mathbf{w}_k^* = \mathbf{w}_k^0$; $\mathbf{f}_k^* = \mathbf{f}_k^0$}
\REPEAT
\STATE Obtain $\mathbf{f}_k^{\tau+1}$ and $\mathbf{w}_k^{\tau+1}$ with fixed $\mathbf{\Phi}_k^\tau$ using Algorithm~\ref{alg.a};
\STATE Obtain data rate $R_{bf}$ with $\mathbf{\Phi}_k^\tau$, $\mathbf{f}_k^{\tau+1}$ and $\mathbf{w}_k^{\tau+1}$;
\IF{$R_k^{\tau} > R_{bf}$}
\STATE $\mathbf{f}_k^{\tau+1} = \mathbf{f}_k^{\tau}$, $\mathbf{w}_k^{\tau+1} = \mathbf{w}_k^{\tau}$;
\ENDIF
\STATE Obtain $\mathbf{\Phi}_k^{\tau+1}$ with fixed $\mathbf{f}_k^{\tau+1}$ and $\mathbf{w}_k^{\tau+1}$ using Algorithm~\ref{alg.c};
\STATE Obtain $R_k^{\tau+1}$ and SNR $\gamma_k^{\tau+1}$ with $\mathbf{\Phi}_k^{\tau+1}$, $\mathbf{f}_k^{\tau+1}$ and $\mathbf{w}_k^{\tau+1}$;
\STATE Update $\tau = \tau+1$;
\UNTIL ${|R_k^{\tau}-R_k^{\tau-1}|}/{R_k^{\tau-1}}<\delta$
\STATE Update $R_k^* = R_k^{\tau}$, $\gamma_k^* = \gamma_k^{\tau}$, $\mathbf{\Phi}_k^* = \mathbf{\Phi}_k^{\tau}$, $\mathbf{w}_k^* = \mathbf{w}_k^{\tau}$, $\mathbf{f}_k^* = \mathbf{f}_k^{\tau}$.
\end{algorithmic}
\end{algorithm}

\subsection{Scheduling Strategy Design}\label{S5-2}
\subsubsection{Motivation and Main Idea}\label{S5-2-1}

After obtaining the maximum achieved rate of each UE with Algorithm~\ref{alg.d},
we need to design a scheduling strategy $\mathbf{U}$ to solve the scheduling strategy design problem. The difficulty in solving this problem lies in how to maximize the system sum rate while meeting the information freshness requirement of each UE. It should be noted that only when the SNR exceeds the threshold for reliable demodulation, can the AoI be reduced. Scheduling the UE with SNR below the threshold will not contribute to the information freshness requirement satisfaction and the sum rate enhancement. Thus, we filter UEs based on SNR and only schedule UEs with SNR above the threshold.

To schedule these filtered UEs, we design two scheduling phases. First, in \emph{Scheduling Phase I}, we wish to ensure that the information freshness requirement of each UE is satisfied. Since we assume each UE receives the same types of service from the BS, which typically means the same information freshness requirements, we adopt a uniform and fair scheduling strategy. We schedule each UE in turn in the descending order of data rate. Each of the $K$ UEs is scheduled once every $K$ timeslots. Within a limited time $T$, this uniform and fair scheduling strategy can ensure that the time slot interval between two adjacent scheduling time slots is consistent for each UE, the difference in the number of scheduling time slots between UE with maximum rate and UE with minimum rate is not greater than once, and the maximum AoI of each UE over $T$ timeslots will not exceed the number of UEs $K$. Therefore, each UE has similar AoI performance. Besides, according to Proposition 1 in~\cite{28}, this uniform and fair scheduling strategy achieves the lower bound of the average episodic AoI, which is defined as $\frac{1}{K}\sum_{k=1}^{K}\mathcal{A}_k$. Motivated by these facts, we use the fair scheduling strategy to obtain better information freshness guarantees.

Then, in \emph{Scheduling Phase II}, we wish to enhance the system sum rate as much as possible based on the scheduling result in Phase I. The main idea is to schedule UEs with the highest data rate as many times as possible without violating the AoI constraint of other UEs. Specifically, we select the UE with the highest data rate as the target UE and traverse all time slots. In each time slot, we replace the scheduled UE with the target UE and test the AoI of the originally scheduled UE. Only if the AoI satisfies the constraint~(\ref{eq29}), is the current replacement adopted.


\subsubsection{Heuristic Scheduling Algorithm}\label{S5-2-2}
The pseudocode of the heuristic scheduling algorithm is presented in Algorithm~\ref{alg.e}.
For ease of presentation, we use $\mathcal{R}$ to denote the set of maximum achievable data rates for UEs without violating SNR constraints. The set of UEs without violating SNR constraints is represented by $\mathcal{K}_u$. The mapping between UE $k$ and its data rate $R_k^*$ is denoted by $k=\mathfrak{K}(R_k^*)$ and $K_t$ is used to represent the scheduled UE in time slot $t$. First, we allocate time slots to UEs that can reliably demodulate the received signal in descending order of their data rates in \emph{Scheduling Phase I}, which corresponds to lines 1-8. Specifically, in each time slot, we schedule the UE with the highest data rate in set $\mathcal{R}_t$, as in lines 2-3. In line 4, the data rate of the scheduled UE is removed from $\mathcal{R}_t$. In lines 5-7, if $\mathcal{R}_t$ is an empty set, which means that all UEs have been scheduled for one round, reinitialize $\mathcal{R}_t$ to $\mathcal{R}$ for the next round of scheduling. Next, we adjust the scheduling strategy to maximize the system sum rate as much as possible in \emph{Scheduling Phase II}, as in lines 9-20. Specifically, we denote the UE with the maximum achievable data rate as $k^{\text{max}}$. For each timeslot $t$, if UE $k^{\text{max}}$ is not scheduled, we replace the scheduled UE $K_t$ to UE $k^{\text{max}}$, as in lines 11 and 12. In lines 13-18, we calculate the AoI of UE $K_t$ and determine whether the AoI constraint is satisfied. If yes, then the adjustment is applied; otherwise, it is not applied. After completing the adjustments of all the timeslots, we obtain the final scheduling strategy.

\begin{algorithm}[htbp]
\small
\caption{\small Scheduling Strategy Design}	
\label{alg.e}
\begin{algorithmic}[1]
\renewcommand{\algorithmicrequire}{ \textbf{Input:}}
\REQUIRE{$\mathcal{R}$; $\mathcal{K}_u$; $T$; $\mathcal{A}_{k,\text{max}}, \forall k$}
\renewcommand{\algorithmicrequire}{ \textbf{Output:}}
\REQUIRE{$\mathbf{U}^*$}
\renewcommand{\algorithmicrequire}{ \textbf{Initialization:}}
\REQUIRE{$u_{k,t} = 0, \forall k,t $; $K_t = 0, \forall t $; $\mathcal{R}_1 = \mathcal{R}$}
\FOR {each time slot $t$}
\STATE $\bar k = \mathfrak{K}(\mathrm{max}(\mathcal{R}_t))$;
\STATE $u_{\bar k,t} = 1$, $K_t = \bar k$;
\STATE $\mathcal{R}_{t+1} = \mathcal{R}_t-\{R_{\bar k}^*\}$;
\IF{$\mathcal{R}_{t+1} = \emptyset$}
\STATE $\mathcal{R}_{t+1} = \mathcal{R}$;
\ENDIF
\ENDFOR
\STATE Treat the UE in $\mathcal{K}_u$ with the highest achievable data rate as the target UE, which is denoted as $k^{\text{max}}$.
\FOR {each time slot $t$} 
\IF{$u_{k^{\text{max}},t} = 0$} 
\STATE $u_{k^{\text{max}},t} = 1, u_{K_t,t} = 0$;
\STATE Calculate the AoI of UE $K_t$;
\IF{$\mathcal{A}_{K_t} \textgreater \mathcal{A}_{K_t,\text{max}}$}
\STATE $u_{k^{\text{max}},t} = 0, u_{K_t,t} = 1$;
\ELSE
\STATE $K_t = k^{\text{max}}$;
\ENDIF
\ENDIF
\ENDFOR  
\end{algorithmic}
\end{algorithm}


\subsection{Sum Rate Maximization}\label{S5-3}
Following the design of the algorithms for the decomposed problems, we propose the sum rate maximization algorithm to solve P1.
The algorithm is given by Algorithm~\ref{alg.f}.
First, we generate the hierarchical codebooks for BS and each UE by Algorithm~\ref{alg.b}. Then we use Algorithm~\ref{alg.d} to obtain the maximum achievable rate and the optimized variables for each UE. As in lines 5-8, we check the SNR of each UE to find the UEs which can achieve successful demodulation. The set $\mathcal{R}$ and $\mathcal{K}_u$ are obtained as the input of Algorithm~\ref{alg.e} to get the scheduling strategy. In the end, since we have obtained $\mathbf{\Phi}_k^*$, $\mathbf{w}_k^*$, $\mathbf{f}_k^*$ for each UE $k$, we can easily obtain $\mathbf{W}^*$, $\mathbf{\Phi}^*$ and $\mathbf{F}^*$ according to the scheduling strategy $\mathbf{U}^*$.

\begin{algorithm}[htbp]
\small
\caption{\small Sum Rate Maximization}	
\label{alg.f}
\begin{algorithmic}[1]
\renewcommand{\algorithmicrequire}{ \textbf{Input:}}
\REQUIRE{$M$; $b$; $K$; $T$; $N_t$; $N_r$; $\mathcal{A}_{k,\text{max}}, \forall k$; $\gamma_{th}$}
\renewcommand{\algorithmicrequire}{ \textbf{Output:}}
\REQUIRE{$\mathbf{U}^*$; $\mathbf{W}^*$; $\mathbf{\Phi}^*$; $\mathbf{F}^*$}
\renewcommand{\algorithmicrequire}{ \textbf{Initialization:}}
\REQUIRE{$\mathcal{R} = \emptyset$; $\mathcal{K}_u = \emptyset$}
\STATE Generate the hierarchical codebook $\Gamma_t$ for BS using Algorithm~\ref{alg.b};
\FOR{each UE $k$}
\STATE Generate the hierarchical codebook $\Gamma_r$ for UE $k$ using Algorithm~\ref{alg.b};
\STATE Obtain $\mathbf{\Phi}_k^*$, $\mathbf{w}_k^*$, $\mathbf{f}_k^*$, $R_k^*$ and $\gamma_k^*$ using Algorithm~\ref{alg.d};
\IF {SNR $\gamma_k^*>\gamma_{th}$}
\STATE $\mathcal{R} = \mathcal{R} \cup \{R_k^*\}$;
\STATE $\mathcal{K}_u = \mathcal{K}_u \cup \{k\}$;
\ENDIF
\ENDFOR
\STATE Obtain $\mathbf{U}^*$ using Algorithm~\ref{alg.e};
\STATE Obtain $\mathbf{W}^*$, $\mathbf{\Phi}^*$ and $\mathbf{F}^*$ based $\mathbf{U}^*$;
\end{algorithmic}
\end{algorithm}

\subsection{Algorithm Analysis}\label{S5-4}

\smallskip
\subsubsection{\textbf{Convergence Analysis}}
In Algorithm~\ref{alg.f}, we can see that the number of iterations for the scheduling algorithm (i.e., Algorithm~\ref{alg.e}) is fixed, while the number of iterations required to obtain the maximum achievable data rate for each UE (i.e., Algorithm~\ref{alg.d}) is uncertain.
Thus, we focus on the convergence of Algorithm~\ref{alg.d}.

First, Theorem 1 shows that the objective function of the original optimization problem is non-decreasing in Algorithm~\ref{alg.d}.
\begin{theorem}
In Algorithm~\ref{alg.d}, $R(\mathbf{\Phi}_k^{\tau+1}, \mathbf{w}_k^{\tau+1}, \mathbf{f}_k^{\tau+1}) \ge R(\mathbf{\Phi}_k^\tau, \mathbf{w}_k^{\tau}, \mathbf{f}_k^{\tau})$.
\end{theorem}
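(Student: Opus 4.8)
The plan is to prove the claimed monotonicity by tracking the per-UE rate $R(\mathbf{\Phi}_k,\mathbf{w}_k,\mathbf{f}_k)$ defined in~(\ref{eq18}) through the two update blocks that constitute one BCD iteration of Algorithm~\ref{alg.d}: the beamforming update (lines~2--6) and the RIS reflection-coefficient update (line~7). I will introduce the intermediate iterate $(\mathbf{\Phi}_k^{\tau},\mathbf{w}_k^{\tau+1},\mathbf{f}_k^{\tau+1})$ and establish the two-link chain
\[
R(\mathbf{\Phi}_k^{\tau+1},\mathbf{w}_k^{\tau+1},\mathbf{f}_k^{\tau+1})\ \ge\ R(\mathbf{\Phi}_k^{\tau},\mathbf{w}_k^{\tau+1},\mathbf{f}_k^{\tau+1})\ \ge\ R(\mathbf{\Phi}_k^{\tau},\mathbf{w}_k^{\tau},\mathbf{f}_k^{\tau}),
\]
where the right inequality is contributed by the beamforming block and the left one by the RIS block.

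For the beamforming block, the crucial observation is that the hierarchical search of Algorithm~\ref{alg.a} is only a heuristic and need not improve the rate by itself, so the argument rests entirely on the explicit acceptance/rejection test in lines~4--6 of Algorithm~\ref{alg.d}. I will split into two cases. If the candidate rate $R_{bf}=R(\mathbf{\Phi}_k^{\tau},\mathbf{w}_k^{\tau+1},\mathbf{f}_k^{\tau+1})$ returned by Algorithm~\ref{alg.a} satisfies $R_{bf}\ge R_k^{\tau}=R(\mathbf{\Phi}_k^{\tau},\mathbf{w}_k^{\tau},\mathbf{f}_k^{\tau})$, the candidate is retained and the inequality is immediate; if $R_{bf}<R_k^{\tau}$, line~5 restores $\mathbf{w}_k^{\tau+1}=\mathbf{w}_k^{\tau}$ and $\mathbf{f}_k^{\tau+1}=\mathbf{f}_k^{\tau}$, so the two sides coincide. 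Either way, $R(\mathbf{\Phi}_k^{\tau},\mathbf{w}_k^{\tau+1},\mathbf{f}_k^{\tau+1})\ge R_k^{\tau}$.

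For the RIS block, I will argue that Algorithm~\ref{alg.c}, initialized at the incumbent $\mathbf{\Phi}_k^{\tau}$ and executed with $\mathbf{w}_k^{\tau+1}$ and $\mathbf{f}_k^{\tau+1}$ held fixed, is a composition of $M$ single-coordinate exact maximizations, none of which can decrease $R_k$. When element $m$ is processed, the inner loop evaluates $R_k$ over \emph{all} $2^b$ admissible phases in $\mathcal{F}$ — in particular over the phase that element $m$ currently holds — and retains a maximizer, so the rate after updating element $m$ is no smaller than before. Composing these $M$ non-decreasing single-coordinate steps yields $R(\mathbf{\Phi}_k^{\tau+1},\mathbf{w}_k^{\tau+1},\mathbf{f}_k^{\tau+1})\ge R(\mathbf{\Phi}_k^{\tau},\mathbf{w}_k^{\tau+1},\mathbf{f}_k^{\tau+1})$, and chaining the two inequalities completes the proof.

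The step I expect to require the most care is the RIS block: one must make explicit that Algorithm~\ref{alg.c} is started from the incumbent $\mathbf{\Phi}_k^{\tau}$ rather than from an arbitrary matrix, that $\mathcal{F}$ contains each element's current phase so that ``leave the element unchanged'' is always an available option inside the inner loop, and that resetting $R_k^{*}=0$ at the top of each element's loop is harmless since $R_k=\log_2(1+\gamma_t)\ge 0$ guarantees the genuinely best phase for element $m$ is recorded. By comparison, the beamforming block reduces to a one-line case analysis once the role of lines~4--6 is recognized, and the remaining manipulations are routine bookkeeping.
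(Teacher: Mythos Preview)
Your proposal is correct and follows essentially the same approach as the paper: both decompose the iteration into the beamforming block and the RIS block, rely on the acceptance/rejection test of lines~4--6 for the first inequality, and on the coordinate-wise maximization in Algorithm~\ref{alg.c} started from the incumbent $\mathbf{\Phi}_k^{\tau}$ for the second, then chain the two. Your treatment is in fact more careful than the paper's, making explicit the case split on $R_{bf}$ versus $R_k^{\tau}$ and the requirement that the incumbent phase lie in $\mathcal{F}$ so that ``do nothing'' is always feasible in each inner loop.
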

\begin{proof}
In line 2 of Algorithm~\ref{alg.d}, with $\mathbf{\Phi}_k^\tau$ given in the $\tau$-th iteration, the beamforming vectors are obtained by Algorithm~\ref{alg.a}.
It is worth noting that for Algorithm~\ref{alg.a}, an early search stage with weak beamforming gains is likely to experience relatively low SNR.
This may lead to a higher probability of failing to find the best beam pair in the early search phase, resulting in subsequent misalignment at higher levels~\cite{37}. Considering this, in lines 4-6 of Algorithm~\ref{alg.d}, we compare $R_k^{\tau}$ and $R_{bf}$ and decide whether to adopt the results of the hierarchical search. Therefore, $R(\mathbf{\Phi}_k^\tau, \mathbf{w}_k^{\tau+1}, \mathbf{f}_k^{\tau+1}) \ge R(\mathbf{\Phi}_k^\tau, \mathbf{w}_k^{\tau}, \mathbf{f}_k^{\tau})$.
Then, in line 7 of Algorithm~\ref{alg.d}, $\mathbf{\Phi}_k$ is updated using Algorithm~\ref{alg.c} with the beamforming vectors fixed.
The local search algorithm aims at maximizing the sum rate and
searches for better phase shift values for each RIS element on the basis of $\mathbf{\Phi}_k^\tau$.
Therefore, the performance of $\mathbf{\Phi}_k^{\tau+1}$ output by Algorithm~\ref{alg.d} is better than or equal to the performance of $\mathbf{\Phi}_k^{\tau}$, which can be expressed as $R(\mathbf{\Phi}_k^{\tau+1}, \mathbf{w}_k^{\tau+1}, \mathbf{f}_k^{\tau+1}) \ge R(\mathbf{\Phi}_k^\tau, \mathbf{w}_k^{\tau+1}, \mathbf{f}_k^{\tau+1})$.
Thus, $R(\mathbf{\Phi}_k^{\tau+1}, \mathbf{w}_k^{\tau+1}, \mathbf{f}_k^{\tau+1}) \ge R(\mathbf{\Phi}_k^\tau, \mathbf{w}_k^{\tau}, \mathbf{f}_k^{\tau})$. This completes the proof.
\end{proof}

In addition, the number of discrete phase shifts and codewords for transmit and receive beamforming are limited, and the scheduling parameter is 0-1 variables, which makes the problem of maximizing the sum rate bounded and the output solutions guaranteed. Therefore, we have completed the proof of the convergence of the sum rate maximization algorithm.

%

\smallskip
\subsubsection{\textbf{Complexity Analysis}}
Since Algorithm~\ref{alg.f} contains two parallel parts: the per-UE rate maximization and the scheduling strategy design, we analyze their complexity separately.
First, for the per-UE rate maximization in lines 1-9, the $for$ loop in line 2 has $K$ iterations.
In line 4, the complexity of Algorithm~\ref{alg.d} is not only related to the number of iterations for the BCD method, which can be represented as $N_{outer}$ to achieve the convergence condition ${|R_k^{\tau}-R_k^{\tau-1}|}/{R_k^{\tau-1}}<\delta$, but also related to the complexity of the beamforming optimization subproblem and RIS reflection coefficient optimization subproblem.
For the former, two codewords are searched for at each layer of codebooks at both BS and UE.
The complexity of the hierarchical search method is $O(2\log_2N_t+2\log_2N_r)$.
For the latter one, the local search algorithm selects the best one among $2^b$ phase shifts for each element while keeping the phase shifts of the remaining elements unchanged. Since the RIS contains $M$ elements, the complexity of this part is $O(M*2^b)$. Therefore, we get the complexity of Algorithm~\ref{alg.d} as $O(N_{outer}*(2\log_2N_t+2\log_2N_r+M*2^b))$, and the complexity of the per-UE rate optimization is $O(K*(N_{outer}*(2\log_2N_t+2\log_2N_r+M*2^b)))$. Then, The scheduling strategy design corresponds to Algorithm~\ref{alg.e}. Both the $for$ loops in line 1 and line 10 have $T$ iterations, and
The two $for$ loops are parallel.
Therefore, the complexity of Algorithm~\ref{alg.e} is $O(T)$.
In summary, the complexity of the sum rate maximization algorithm is
$O(\max(K*(N_{outer}*(2\log_2N_t+2\log_2N_r+M*2^b)),T))$.
By simulation tests, $N_{outer}$ ranges from 3 to 6.
Such low complexity of the algorithm
makes it suitable for practical implementation.

\section{Simulation Results and Discussions}\label{S6}
In this section, we evaluate the performance of the proposed
algorithm under various representative parameters. We also compare the performance of the proposed
scheme with several baseline schemes and investigate the impact of different parameters on system performance.

\begin{table}[!t]
\caption{Simulation Parameters}
\label{Simulation}
\centering
\small
\setlength{\tabcolsep}{2mm}{
\begin{tabular}{ll}
\toprule 
Parameter & Value\\
\midrule 
Transmit power ${P_T}$ & $45$ dBm\\
Noise power $\sigma^2$ & $-90$ dBm\\
Carrier frequency $f_c$ & $28$ GHz\\
Termination iteration threshold $\delta$ & $3\times10^{-3}$\\
SNR threshold value $\gamma_{th}$ & $2$ dB\\
Number of path for BS-RIS channel $P$ & $4$\\
Number of path for RIS-UE channel $L$ & $4$\\
\bottomrule 
\end{tabular}}
\end{table}

\subsection{Simulation Setup }\label{S6-1}
\begin{figure}[t]
\centering
\includegraphics*[width=3.2in]{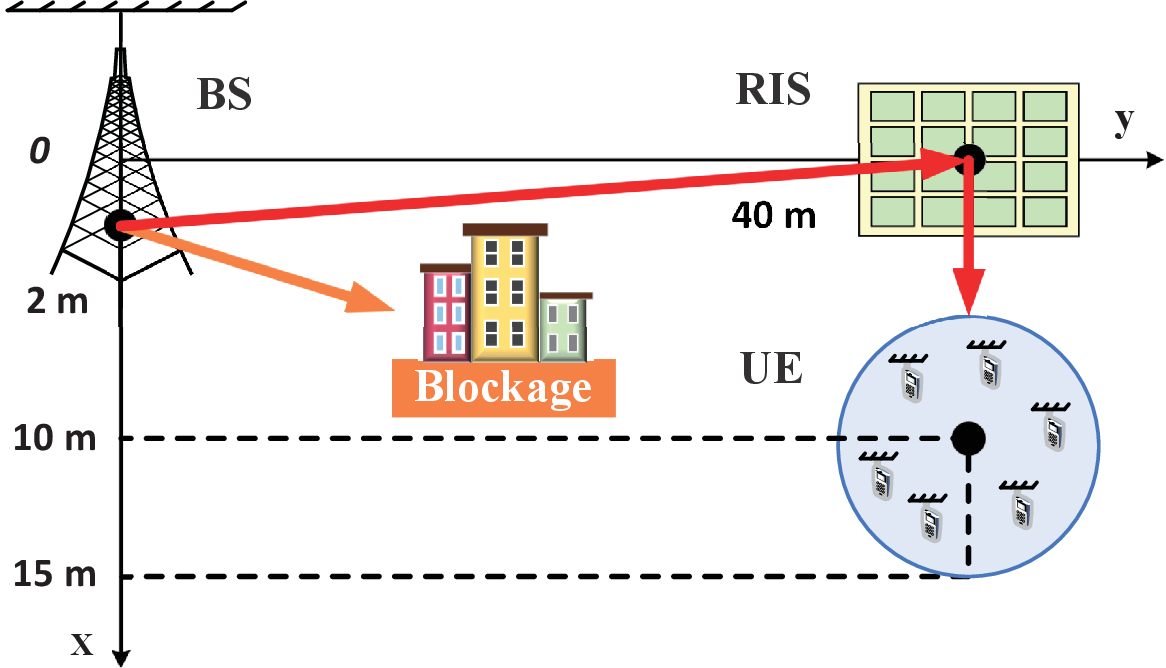}
\caption{Locations of communication nodes in the simulation.}
\label{fig:3}
\end{figure}
In the simulation, we establish a Cartesian coordinate system to describe the locations of communication nodes.
As shown in Fig.~\ref{fig:3}, the coordinates of the BS and the RIS
are given by (2 m, 0 m) and (0 m, 40 m), respectively. UEs are uniformly distributed in a circle centered at (10 m, 40 m) with a radius of 5 m. The height of the BS, the RIS, and the UEs is set to 10 m, 2.5 m, and 1.5 m, respectively. The BS-RIS channel and the RIS-UE channel are generated
according to the aforementioned SV model in LOS
scenarios, which can be further written as
\begin{equation}
\begin{split}
\mathbf{G}=\sqrt{\frac{N_tM}{P}}\bigg({\tilde \alpha_1}\mathbf{a}_{r}\left(M, \phi^r_{RIS,1},\zeta^r_{RIS,1}\right)\mathbf{a}_{t}^H\left(N_t, \psi^t_{BS,1}\right)\\
+\mathop \sum \limits_{i = 2}^P {\tilde \alpha_i}\mathbf{a}_{r}\left(M, \phi^r_{RIS,i},\zeta^r_{RIS,i}\right)\mathbf{a}_{t}^H\left(N_t, \psi^t_{BS,i}\right)\bigg),
\label{eq35}
\end{split}
\end{equation}
\begin{equation}
\begin{split}
\mathbf{H}_{r,t}=\sqrt{\frac{MN_r}{L}}\bigg({\tilde \beta_1}\mathbf{a}_{r}\left(N_r, \psi^r_{UE,1}\right)\mathbf{a}_{t}^H\left(M, \phi^t_{RIS,1},\zeta^t_{RIS,i}\right)\\
+\mathop \sum \limits_{i = 2}^L {\tilde \beta_i}\mathbf{a}_{r}\left(N_r, \psi^r_{UE,i}\right)\mathbf{a}_{t}^H\left(M, \phi^t_{RIS,i},\zeta^t_{RIS,i}\right)\bigg),
\label{eq36}
\end{split}
\end{equation}
where $\tilde \alpha_1$ ($\tilde \beta_1$) $\sim \mathcal{CN}(0,10^{-0.1\kappa})$ denotes the complex gain with the LOS component, $\tilde \alpha_i$ ($\tilde \beta_i$) $\sim \mathcal{CN}(0,10^{-0.1(\kappa+\mu)})$ denotes the complex gain with the $i$-th NLOS path, and $\kappa$ is the pathloss given by~\cite{38}
\begin{equation}
\kappa = a+10b\log_{10}(\tilde d)+\xi,
\label{eq37}
\end{equation}
in which $\tilde d$ is the distance between the transmitter and receiver, and $\xi \sim \mathcal{N}(0,\sigma^2_{\xi})$.
The values of $a$, $b$ and $\sigma_\xi$
are set as $a = 61.4$, $b = 2$, and $\sigma_\xi = 5.8 \text{dB}$ as suggested by
LOS real-world channel measurements~\cite{38}.
The Rician factor $\mu$ is set to 10,
which is defined as the ratio of the energy in the LOS path to the
sum of the energy in other NLOS paths~\cite{11,39}.
In the following simulations,
unless specified otherwise, we assume
$K = 6$, $M_a = M_b = 10$, $N_t = N_r = 64$, $b=3$, $T=100$, and $\mathcal{A}_{k,\text{max}}=\mathcal{A}_{\text{max}}=9, \forall k$. All simulation curves
are averaged over 100 independent channel realizations.
Other parameters are set as listed in Table~\ref{Simulation}.

To validate the system performance of the proposed algorithm, we compare it with the following baseline algorithms:
\begin{enumerate}
  \item \textbf{Random-RIS}: this algorithm randomly selects a feasible phase shift for each RIS element and keeps on using these phase shifts. Then, beamforming vectors are obtained by the hierarchical search method and the scheduling strategy is determined by Algorithm~\ref{alg.e}.

  \item \textbf{Random-BF}: this algorithm randomly chooses the codewords from the codebooks for beamforming at both the BS side and the scheduled UE side. The codebook consists of all the code words in the last layer of the hierarchical codebook. Then, the RIS reflection coefficients are adjusted by the local research method and the scheduling strategy is computed by Algorithm~\ref{alg.e}.

  \item \textbf{Round-Robin scheduling}: the only difference between this scheme and the proposed algorithm is the scheduling strategy. This scheme allocates time slots to UEs in descending order of data rates as in lines 1-7 of Algorithm~\ref{alg.e}, but it does not make further adjustments to the scheduling strategy.

\end{enumerate}

\subsection{Performance Evaluation }\label{S6-1}
\subsubsection{Impact of Maximum Tolerable AoI}
In Fig.~\ref{fig:5} and Fig.~\ref{fig:6}, we study
the impact of the maximum tolerable AoI $\mathcal{A}_{\text{max}}$ on the performance of the four schemes, which indicates the information freshness requirement of UEs.
Specifically, Fig.~\ref{fig:5} compares the sum rates of these schemes over $T$ time slots under different $\mathcal{A}_{\text{max}}$, and
Fig.~\ref{fig:6} compares the average system AoI of these schemes under different $\mathcal{A}_{\text{max}}$, which is defined as the average AoI of all UEs, (i.e., $\frac{1}{K}\sum_{k=1}^{K}\mathcal{A}_k$).
There are several important observations.
First, the Round-Robin scheduling algorithm achieves the lowest average AoI, but the sum rate and the average AoI do not change with $\mathcal{A}_{\text{max}}$.
The reason is that it cannot adjust the time slot allocation according to the information freshness constraint. In other words, the scheduling strategy is consistent under different information freshness requirements, which limits the sum rate.
In contrast, the proposed algorithm improves the sum rate performance at the cost of increasing the AoI while ensuring that the AoI constraints are satisfied. As $\mathcal{A}_{\text{max}}$ is increased, the algorithm can increase AoI accordingly to obtain a larger sum rate.
In addition, for the Random-BF scheme and Random-RIS scheme,
both the data rates and the average AoI beyond $\mathcal{A}_{\text{max}}$ are poor.
This is because the random beamforming or random RIS reflection coefficients severely degrades the received signal quality, and even makes most UEs unable to demodulate the transmit signal. In this case, we treat the data rates of these UEs as zero. Accordingly, these UEs cannot be scheduled and the AoI of these UEs keeps on accumulating over time, which results in poor sum rate and AoI performance.
\begin{figure}[t]
\centering
\includegraphics*[width=3.2in,height=2.5in]{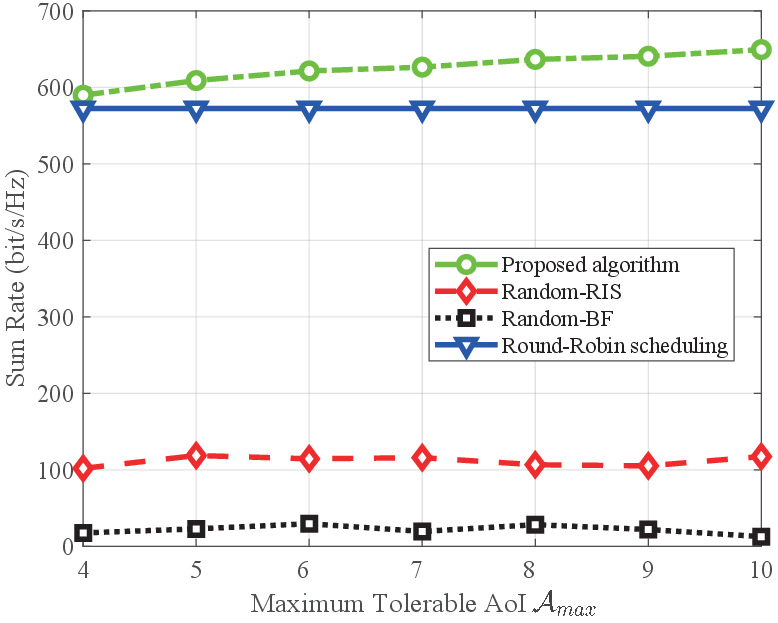}
\caption{Sum rate over $T$ time slots versus $\mathcal{A}_{\text{max}}$.}
\label{fig:5}
\end{figure}
\begin{figure}[t]
\centering
\includegraphics*[width=3.2in,height=2.5in]{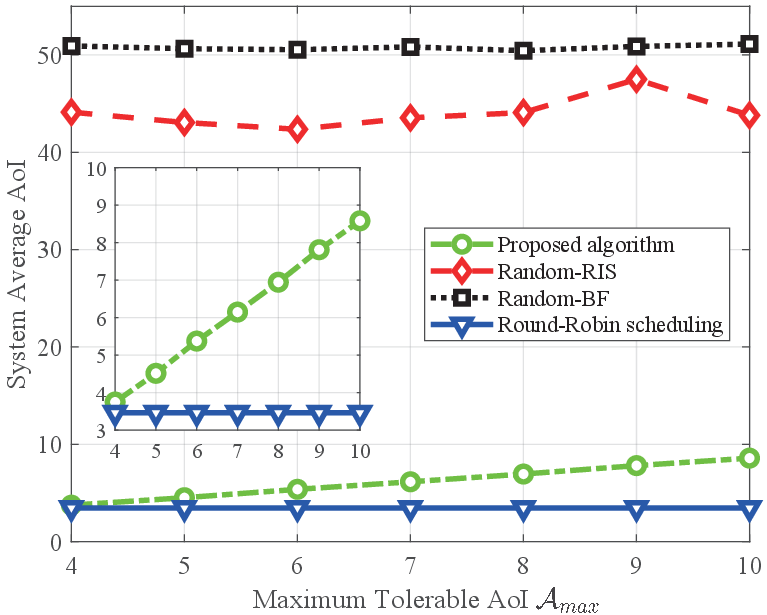}
\caption{System average AoI under different $\mathcal{A}_{\text{max}}$.}
\label{fig:6}
\end{figure}

In Fig.~\ref{fig:7}, we focus on the proposed algorithm and show the average AoI performance of each UE under different $\mathcal{A}_{\text{max}}$. Among all UEs, UE 2 has the highest data rate. First, we observe that the average AoI of each UE for different $\mathcal{A}_{\text{max}}$ does not exceed $\mathcal{A}_{\text{max}}$. Then, the larger the $\mathcal{A}_{\text{max}}$, the smaller the average AoI of UE 2 and the larger the average AoI of other UEs. This is because as $\mathcal{A}_{\text{max}}$ is increased, fewer time slots are needed to meet the information freshness requirements, so the proposed algorithm can allocate more time slots to UE 2 to enhance the sum rate over $T$ time slots, which reduces the AoI of UE 2. This further explains the increases of the sum rate and average AoI of the proposed algorithm in Fig.~\ref{fig:5} and Fig.~\ref{fig:6} with increased $\mathcal{A}_{\text{max}}$.

In Fig.~\ref{fig:8}, we consider the case where
there are two optional service types on UEs, which correspond to different information freshness requirements.
According to the service type of each UE, We divide UEs into two categories according to their information freshness requirements: UEs with high requirement and UEs with low requirement. The high-requirement corresponds to $\mathcal{A}_{k,\text{max}}=4$ and the low-requirement corresponds to $\mathcal{A}_{k,\text{max}}=9$. We plot the
sum rate over $T$ time slots while varying the number of UEs with high requirement from 0 to 6. When there are more high-requirement UEs in the system, the proposed algorithm needs to spend more time slots to satisfy the information freshness requirements, resulting in a lower sum rate. However, the proposed scheme still achieves the best performance among all the schemes.

\begin{figure}[!t]
\centering
\includegraphics*[width=3.2in,height=2.5in]{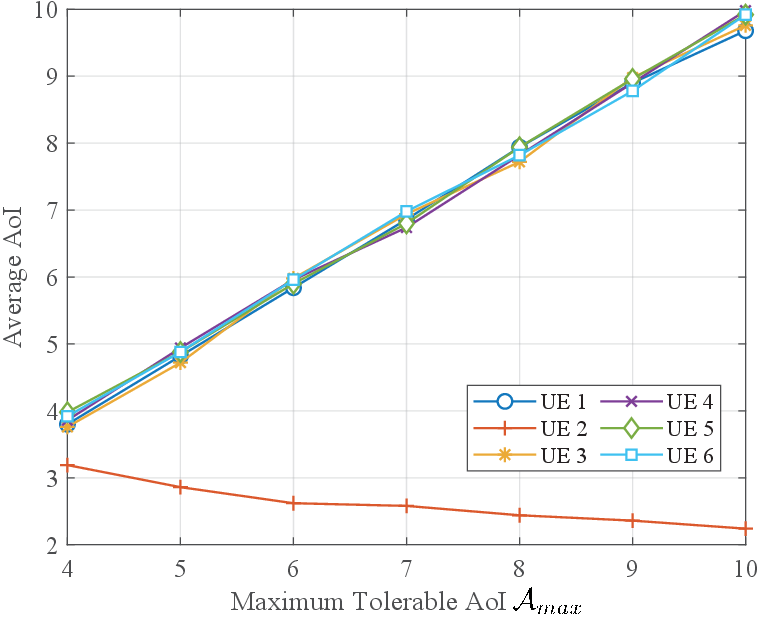}
\caption{Average AoI of each UE under different $\mathcal{A}_{\text{max}}$.}
\label{fig:7}
\end{figure}
\begin{figure}[!t]
\centering
\includegraphics*[width=3.2in,height=2.5in]{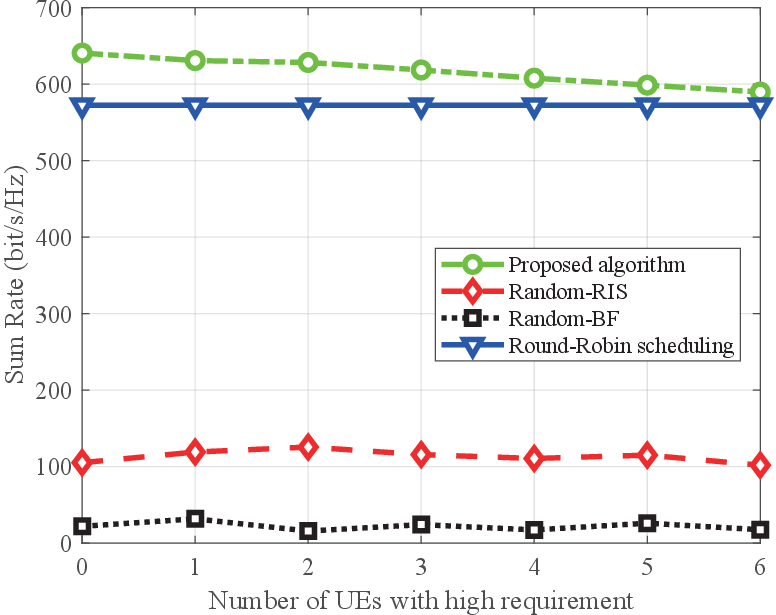}
\caption{Sum rate over $T$ time slots versus the number of UEs with high requirement.}
\label{fig:8}
\end{figure}

\subsubsection{Impact of Other Parameters}
In Fig.~\ref{fig:8b}, we examine the rate performance of each UE under the different number of iterations in Algorithm~\ref{alg.d}. We can see that the rate of all UEs converges to a stable value, which validates our convergence analysis in Section V-D. Besides, the number of iterations required for convergence is no more than 6, indicating that the BCD algorithm has a very fast convergence rate. Similar convergence rates can be seen in other related papers using the BCD algorithm, such as Fig. 3 and Fig. 4 in~\cite{40} and Fig. 3 in~\cite{41}. Such a fast convergence rate allows the algorithm to have reduced complexity.
%
%

In Fig.~\ref{fig:9}, we vary the number of UEs from 4 to 14 and compare the four schemes in terms of the sum rate over $T$ time slots.
Under the different number of UEs, we always set UE 2 as the UE with the highest rate.
It is observed that the proposed algorithm achieves the highest sum rate. As the number of UEs is increased, the sum rate of the proposed algorithm shows a decreasing trend. This is because the scheduling strategy needs to meet the information freshness requirements for more UEs within $T$ time slots, and the number of additional time slots allocated to the UE with the highest data rate is reduced accordingly.
In contrast, the Round-Robin scheduling scheme does not take into account the information freshness of UEs, so the changes in the sum rate are
only related to the rate performance of the added UEs.
In general, when the number of UEs is 14, the performance gap between the proposed algorithm and Round-Robin scheduling is $6.28\%$.
Further, a noticeable difference is observed between the proposed scheme and the other two schemes, i.e., Random-RIS and Random-BF, revealing the importance of jointly optimizing both RIS reflection coefficients and beamforming.
\begin{figure}[t]
\centering
\includegraphics*[width=3.2in,height=2.5in]{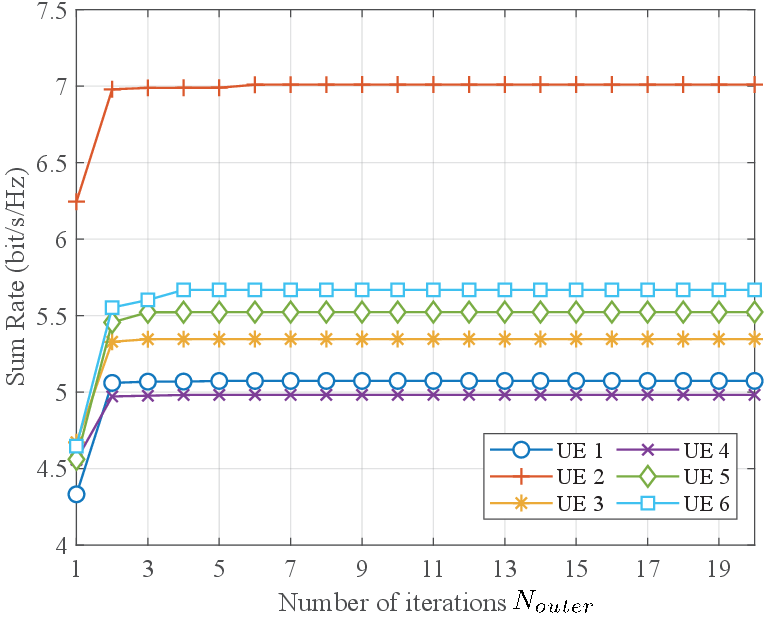}
\caption{Sum rate over $T$ time slots versus number of iteration $N_{outer}$.}
\label{fig:8b}
\end{figure}
\begin{figure}[!t]
\centering
\includegraphics*[width=3.2in,height=2.5in]{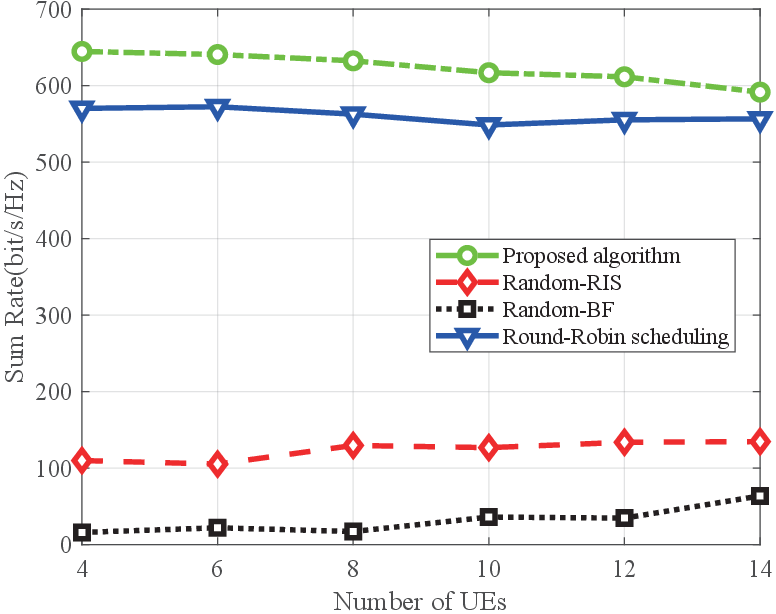}
\caption{Sum rate over $T$ time slots versus the number of UEs.}
\label{fig:9}
\end{figure}

In Fig.~\ref{fig:10}, we plot the sum rates of the four schemes over $T$ time slots while increasing the bit-quantization number from 1 to 6.
As seen from the given results, the proposed algorithm outperforms the baseline schemes.
The sum rates of the proposed scheme and the Round-Robin scheduling scheme gradually increase as $b$ grows from 1 to 3, and then basically remain unchanged from 3 to 6. This shows that the system performance tends to be saturated when the number of quantization bits exceeds 3.
The performance of the Random-RIS scheme is similar to that of the proposed scheme when $b$ is 1.
However, with the increase of $b$, it is more difficult to obtain an effective reflection coefficients matrix by Random-RIS.
So the gap with the proposed algorithm widens when $b>1$, and the sum rate fluctuates around a lower value.
In addition, for Random-BF, when $b=1$, the sum rate is close to 0, which means there are few UEs in the system which can reliably demodulate the transmit signal. As $b$ increases, RIS can provide performance gain for reliable demodulation.
However, due to the random beamforming, the beams between BS, RIS, and the scheduled UE are not well aligned, which impedes the growth of the sum rate.
\begin{figure}[!t]
\centering
\includegraphics*[width=3.2in,height=2.5in]{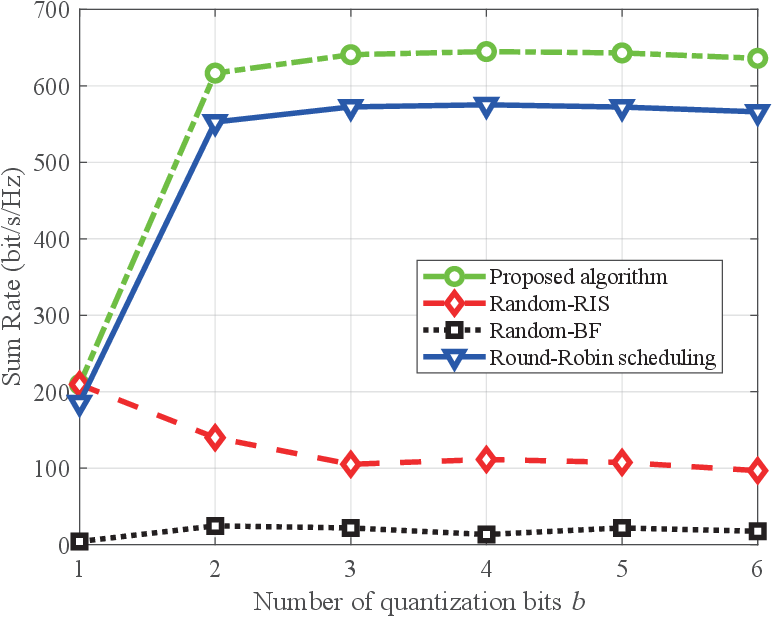}
\caption{Sum rate over $T$ time slots versus the number of quantization bits $b$.}
\label{fig:10}
\end{figure}

In Fig.~\ref{fig:11}, we plot the sum rates over $T$ time slots of the four schemes versus the number of RIS elements $M$. We see that the proposed algorithm outperforms the others as $M$ is increased from 36 to 256. Then, all these four schemes show an increasing trend with $M$, which indicates that we can enhance the system sum rate by deploying RIS with more elements.
Note that the increase of the Random-RIS scheme is due to the aperture gain of the RIS. The larger the RIS aperture, the more signal power in the BS-RIS link can be collected by the RIS.
Further, the gaps between the proposed algorithm and the other two schemes, i.e., Random-RIS and Random-BF, gradually widen as $M$ is increased.
Therefore, we need to design the joint RIS and beamforming optimization more carefully when more RIS elements are available.
\begin{figure}[!t]
\centering
\includegraphics*[width=3.2in,height=2.5in]{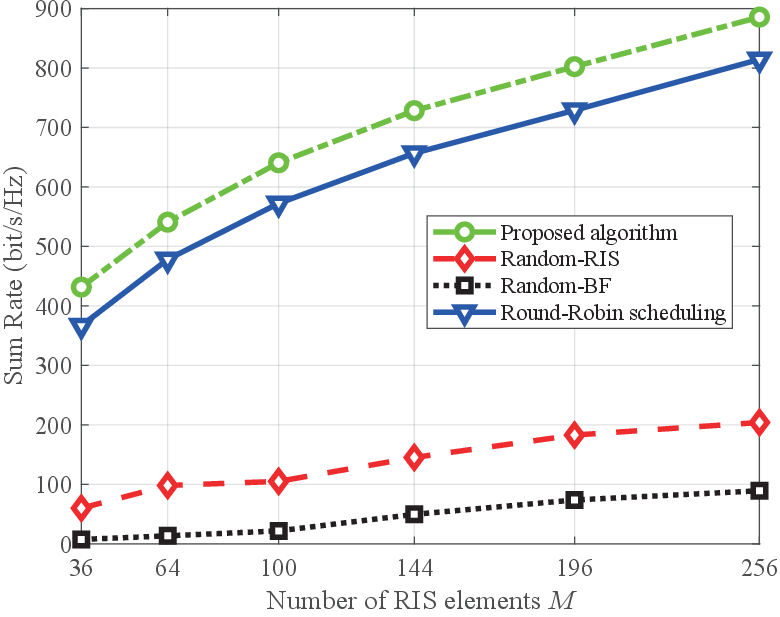}
\caption{Sum rate over $T$ time slots versus the number of RIS elements $M$.}
\label{fig:11}
\end{figure}

In Fig.~\ref{fig:12}, we plot the sum rate over $T$ time slots versus the number of transmit antenna $N_t$ at the BS, which is varied from 16 to 512. It can be seen that the sum rate increases with the
number of transmit antennas for the proposed algorithm, the Round-Robin scheduling scheme, and the Random-RIS scheme.
The growths slow down with further increased number of antennas. When $N_t$ is less than 128, the increase of $N_t$ results in the most significant improvement in the sum rate.
However, since the Random-BF scheme cannot provide a stable beamforming gain for the system, the increase in the number of transmit antennas has little impact on its performance.
From this figure, we observe that the proposed algorithm has the highest sum rate than the other schemes. In general, when the number of transmit antennas is 512, the performance gap between the proposed algorithm and the three baseline schemes is $9.5\%$, $231.7\%$, and $5039.3\%$, respectively.
\begin{figure}[!t]
\centering
\includegraphics*[width=3.2in,height=2.5in]{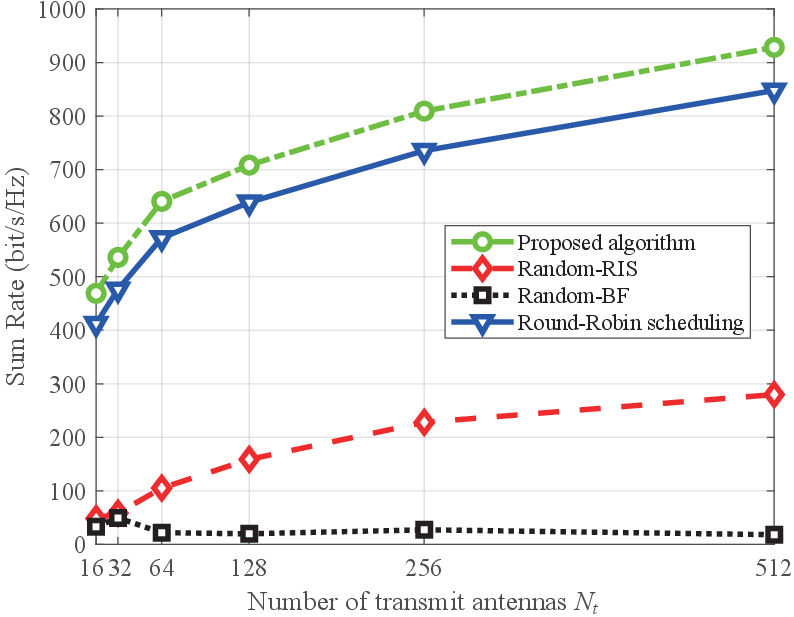}
\caption{Sum rate over $T$ time slots versus the number of transmit antennas $N_t$.}
\label{fig:12}
\end{figure}

In Fig.~\ref{fig:13}, we vary $T$ from 50 to 300 and compare the four schemes in terms of the sum rate over $T$ time slots. We can see that except for the Random-BF scheme, where the beams cannot be well aligned, the sum rates of all the other schemes increase linearly with $T$.
The average sum rate of $T$ time slots for the three schemes, i.e., $\frac{1}{T} \sum_{t=1}^{T} R_t$, can be calculated as
5.97 bit/s/Hz, 5.69 bit/s/Hz, and 1.07 bit/s/Hz, respectively.
Apparently, the proposed algorithm has the best sum rate performance. The reason is that the proposed scheme can schedule the UE with the highest data rate as many times as possible compared to Round-Robin scheduling, and it can achieve the joint optimization of beamforming and RIS reflection coefficients compared to Random-BF and Random-RIS.
\begin{figure}[!t]
\centering
\includegraphics*[width=3.2in,height=2.5in]{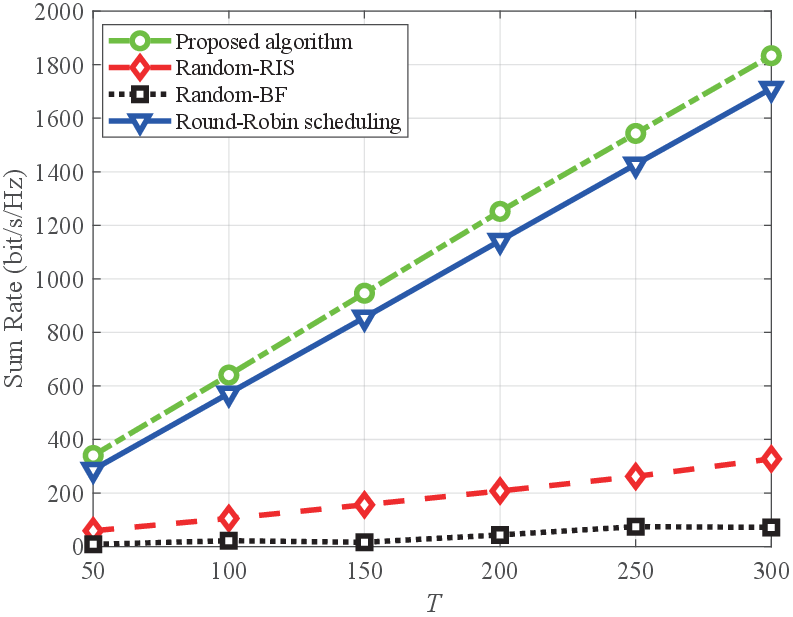}
\caption{Sum rate over $T$ time slots versus $T$.}
\label{fig:13}
\end{figure}

\section{Conclusions}\label{S7}

In this paper, we investigated the sum rate maximization problem in RIS-assisted mmWave MIMO communication systems, where the information freshness requirements of all UEs should be satisfied. To solve this problem, we adopted the BCD method to jointly optimize RIS reflection coefficients and beamforming, and the heuristic scheduling algorithm to design the scheduling strategy. In particular, considering the difficulty of channel estimation in such systems, we utilized the hierarchical search method to update beamforming and the local search method to update RIS reflection coefficients. Simulation results showed that our algorithm can not only ensure the information freshness of UEs but also have the best sum rate performance. In future work, we will consider the case of scheduling multiple UEs in each time slot, where we will jointly design beamforming vectors, RIS phase shifts, and scheduling strategies to combat inter-user interference and satisfy the requirements of information freshness. In addition, we will extend this work to multi-cell multi-RIS scenarios in the future. With the joint design and optimization for multiple BSs and RISs, the information freshness requirement of UEs can be more effectively satisfied, and the system sum rate can be further improved.

\bibliographystyle{IEEEtranTCOM}

\bibliographystyle{IEEEtran}

\begin{IEEEbiography}[{\includegraphics[width=1in,trim=0 15 0 0,clip,keepaspectratio]{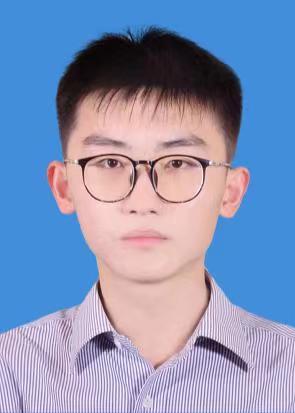}}] {Ziqi Guo} was born in Shandong, China, in 2000. He received the B.E. degree in communication engineering from Beijing Jiaotong University, Beijing, China, in 2021. He is currently working toward the M.S. degree with the State Key Laboratory of Advanced Rail Autonomous Operation, Beijing Jiaotong University, Beijing, China. His research interests include mmWave wireless communications and reconfigurable intelligent surface.
\end{IEEEbiography}

\begin{IEEEbiography}[{\includegraphics[width=1in,trim=0 15 0 0,clip,keepaspectratio]{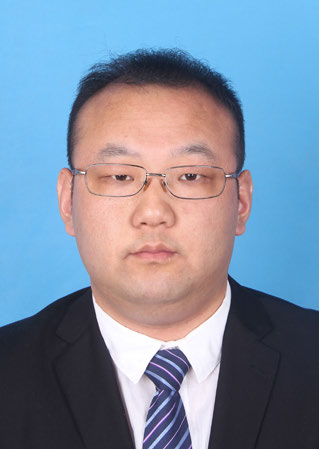}}]{Yong Niu}(Senior Member, IEEE) received the B.E. degree in electrical engineering from Beijing Jiaotong University, China, in 2011, and the Ph.D. degree in electronic engineering from Tsinghua University, Beijing, China, in 2016.

From 2014 to 2015, he was a Visiting Scholar with the University of Florida, Gainesville, FL, USA. He is currently an Associate Professor with State Key Laboratory of Advanced Rail Autonomous Operation, Beijing Jiaotong University. His research interests include networking and communications, including millimeter wave communications, device-to-device communication, mediumaccess control, and software-defined networks. He has served as a Technical Program Committee Member for IWCMC 2017, VTC 2018-Spring, IWCMC 2018, INFOCOM 2018, and ICC 2018. He was the Session Chair for IWCMC 2017. He was a recipient of the Ph.D. National Scholarship of China in 2015, the Outstanding Ph.D. Graduates and Outstanding Doctoral Thesis of Tsinghua University in 2016, the Outstanding Ph.D. Graduates of Beijing in 2016, and the Outstanding Doctorate Dissertation Award from the Chinese Institute of Electronics in 2017, and the 2018 International Union of Radio Science Young Scientist Award.
\end{IEEEbiography}

\begin{IEEEbiography}[{\includegraphics[width=1in,trim=0 15 0 0,clip,keepaspectratio]{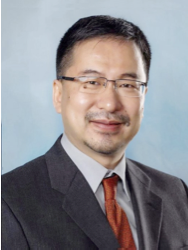}}]{Shiwen Mao}(Fellow, IEEE) received his Ph.D. in electrical engineering from Polytechnic University, Brooklyn, NY in 2004. Currently, he is a Professor and Earle C. Williams Eminent Scholar Chair in Electrical and Computer Engineering at Auburn University, Auburn, AL. His research interests include wireless networks, multimedia communications, and smart grid. He is a Distinguished Lecturer of IEEE Communications Society (2021-2022) and IEEE Council of RFID (2021-2022), and a Distinguished Lecturer (2014-2018) and a Distinguished Speaker of IEEE Vehicular Technology Society (2018-2021). He is on the Editorial Board of IEEE/CIC China Communications, IEEE Transactions on Wireless Communications, IEEE Internet of Things Journal, IEEE Open Journal of the Communications Society, ACM GetMobile, IEEE Transactions on Cognitive Communications and Networking, IEEE Transactions on Network Science and Engineering, IEEE Transactions on Mobile Computing, IEEE Multimedia, IEEE Network, and IEEE Networking Letters. He is a co-recipient of the 2021 IEEE Internet of Things Journal Best Paper Award, the 2021 IEEE Communications Society Outstanding Paper Award, the IEEE Vehicular Technology Society 2020 Jack Neubauer Memorial Award, the IEEE ComSoc MMTC 2018 Best Journal Paper Award and the 2017 Best Conference Paper Award, the Best Demo Award of IEEE SECON 2017, the Best Paper Awards of IEEE GLOBECOM 2019, 2016, and 2015, IEEE WCNC 2015, and IEEE ICC 2013, and the 2004 IEEE Communications Society Leonard G. Abraham Prize in the Field of Communications Systems. He is a Fellow of the IEEE and a Member of the ACM.
\end{IEEEbiography}

\begin{IEEEbiography}[{\includegraphics[width=1in,trim=0 15 0 0,clip,keepaspectratio]{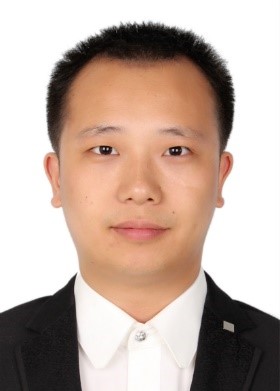}}]{Changming Zhang} received the B.S. degree from the Department of Electronic Information Science and Technology, Beijing Normal University, Beijing, China, in 2010, and the Ph.D. degree from the Department of Electronic Engineering, Tsinghua University, Beijing, in 2015. He is currently a research expert with the Research Institute of Intelligent Networks, Zhejiang Lab, Hangzhou, China. His research interests include millimeter-wave and terahertz wireless communications, including huge-capacity transmission, complex digital signal processing, and broadband wireless networks.
\end{IEEEbiography}

\begin{IEEEbiography}[{\includegraphics[width=1in,trim=0 15 0 0,clip,keepaspectratio]{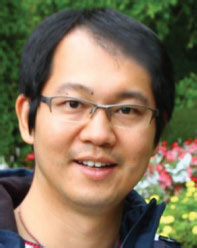}}]{Ning Wang}(Member, IEEE) received the B.E. degree in communication engineering from Tianjin University, Tianjin, China, in 2004, the M.A.Sc. degree in electrical engineering from The University of British Columbia, Vancouver, BC, Canada,
in 2010, and the Ph.D. degree in electrical engineering from the University of Victoria, Victoria, BC, Canada, in 2013. From 2004 to 2008, he was with the China Information Technology Design and Consulting Institute, as a Mobile Communication System Engineer, specializing in planning and design of commercial mobile communication networks, network traffic analysis, and radio network optimization. From 2013 to 2015, he was a Post-Doctoral Research Fellow with the Department of Electrical and Computer Engineering, The University of British Columbia. Since 2015, he has been with the School of Information Engineering, Zhengzhou University, Zhengzhou, China, where he is currently an Associate Professor. He also holds adjunct appointments with the Department of Electrical and Computer Engineering, McMaster University, Hamilton, ON, Canada, and the Department of Electrical and Computer Engineering, University of Victoria, Victoria, BC, Canada. His research interests include resource allocation and security designs of future cellular networks, channel modeling for wireless communications, statistical signal processing, and cooperative wireless communications. He has served on the technical program committees of international conferences, including the IEEE GLOBECOM, IEEE ICC, IEEE WCNC, and CyberC. He was on the Finalist of the Governor Generals Gold Medal for Outstanding Graduating Doctoral Student from the University of Victoria in 2013.
\end{IEEEbiography}

\begin{IEEEbiography}[{\includegraphics[width=1in,trim=0 15 0 0,clip,keepaspectratio]{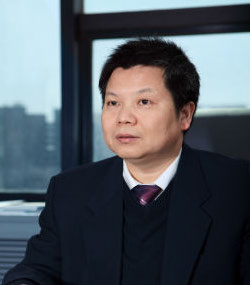}}]{Zhangdui Zhong}(Fellow, IEEE) received the B.E. and M.S. degrees from Beijing Jiaotong University, Beijing, China, in 1983 and 1988, respectively.

He is currently a Professor and an Advisor of Ph.D. students with Beijing Jiaotong University, where he is also the Chief Scientist of State Key Laboratory of Advanced Rail Autonomous Operation. He is the Director of the Innovative Research Team, Ministry of Education, Beijing, and the Chief Scientist of the Ministry of Railways, Beijing. He is an Executive Council Member of the Radio Association of China, Beijing, and the Deputy Director of the Radio Association, Beijing. His research interests include wireless communications for railways, control theory, and techniques for railways, and GSM-R systems. His research has been widely used in railway engineering, such as the Qinghai-Xizang railway, Datong-Qinhuangdao Heavy Haul railway, and many high-speed railway lines in China. He has authored or coauthored seven books, five invention patents, and over 200 scientific research papers in his research area. He was a recipient of the Mao YiSheng Scientific Award of China, Zhan TianYou Railway Honorary Award of China, and Top 10 Science/Technology Achievements Award of Chinese Universities.
\end{IEEEbiography}

\begin{IEEEbiography}[{\includegraphics[width=1in,trim=0 15 0 0,clip,keepaspectratio]{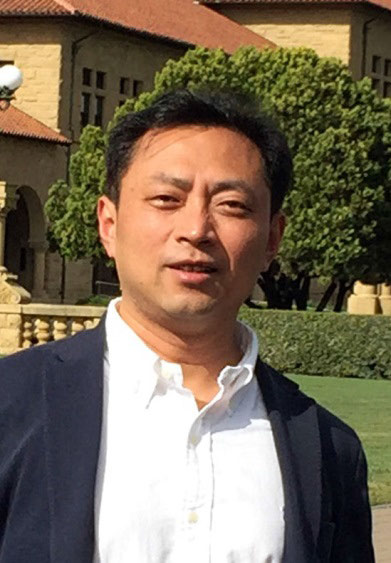}}]{Bo Ai}(Fellow, IEEE) received the M.S. and Ph.D. degrees from Xidian University, Xi'an, China, in 2002 and 2004, respectively.

He was an Excellent Post-Doctoral Research Fellow at Tsinghua University, Beijing, China, in 2007. He was a Visiting Professor at the EE Department, Stanford University, in 2015. He is currently working at Beijing Jiaotong University as a Full Professor and a Ph.D. Candidate Advisor. He is also the Deputy Director of State Key Laboratory of Advanced Rail Autonomous Operation and the Deputy Director of the International Joint Research Center. He is one of the main responsible people for Beijing "Urban rail operation control system" International Science and Technology Cooperation Base and the Member of the Innovative Engineering based jointly granted by Chinese Ministry of Education and the State Administration of Foreign Experts Affairs. He has authored/coauthored eight books and published over 300 academic research papers in his research area. He has hold 26 invention patents. He has been the research team leader for 26 national projects and has won some important scientific research prizes. He has been notified by Council of Canadian Academies (CCA), that based on Scopus database. He has been listed as one of the Top 1 authors in his field all over the world. He has also been Feature Interviewed by ELECTRONICS LETTERS (IET). His interests include the research and applications of channel measurement and channel modeling, dedicated mobile communications for rail traffic systems. He has received some important scientific research prizes.
\end{IEEEbiography}

\bibliographystyle{IEEEtran}

\begin{thebibliography}{10}
\small

\bibitem{01}
ITU-R M.2370-0, ``IMT traffic estimates for the years 2020 to 2030,'' July 2015.

\bibitem{02}
S. Rangan, T. S. Rappaport and E. Erkip, ``Millimeter-Wave Cellular Wireless Networks: Potentials and Challenges,'' \emph{Proceedings of the IEEE}, vol.~102, no.~3, pp. 366--385, March 2014.

\bibitem{02b}
A. Ghosh et al., ``Millimeter-Wave Enhanced Local Area Systems: A High-Data-Rate Approach for Future Wireless Networks,'' \emph{IEEE Journal on Selected Areas in Communications}, vol.~32, no.~6, pp. 1152--1163, June 2014.

\bibitem{02c}
S. Sun, T. S. Rappaport, M. Shafi, P. Tang, J. Zhang and P. J. Smith, ``Propagation Models and Performance Evaluation for 5G Millimeter-Wave Bands,'' \emph{IEEE Transactions on Vehicular Technology}, vol.~67, no.~9, pp. 8422--8439, Sept. 2018.


\bibitem{08}
R. D. Yates, Y. Sun, D. R. Brown, S. K. Kaul, E. Modiano and S. Ulukus, ``Age of Information: An Introduction and Survey,'' \emph{IEEE Journal on Selected Areas in Communications}, vol.~39, no.~5, pp. 1183--1210, May 2021.

\bibitem{08b}
Y. Sun, E. Uysal-Biyikoglu, R. D. Yates, C. E. Koksal and N. B. Shroff, ``Update or Wait: How to Keep Your Data Fresh,'' \emph{IEEE Transactions on Information Theory}, vol.~63, no.~11, pp. 7492--7508, Nov. 2017.



\bibitem{03}
A. Alkhateeb, J. Mo, N. Gonzalez-Prelcic and R. W. Heat, ``MIMO Precoding and Combining Solutions for Millimeter-Wave Systems,'' \emph{IEEE Communications Magazine}, vol.~52, no.~12, pp. 122-131, Dec. 2014.



\bibitem{04}
Y. Niu, Y. Li, D. Jin, L. Su, and A.V. Vasilakos, ``A survey of millimeter wave communications (mmWave) for 5G: opportunities and challenges,'' \emph{Springer Wireless networks}, vol.~21, no.~8, pp. 2657--2676, Apr. 2015.


\bibitem{05}
P. Wang, J. Fang, X. Yuan, Z. Chen and H. Li, ``Intelligent Reflecting Surface-Assisted Millimeter Wave Communications: Joint Active and Passive Precoding Design,'' \emph{IEEE Transactions on Vehicular Technology}, vol.~69, no.~12, pp. 14960--14973, Dec. 2020.


\bibitem{06}
Q. Wu, S. Zhang, B. Zheng, C. You and R. Zhang, ``Intelligent Reflecting Surface-Aided Wireless Communications: A Tutorial,'' \emph{IEEE Transactions on Communications}, vol.~69, no.~5, pp. 3313--3351, May 2021.




\bibitem{09}
P. Wang, J. Fang, W. Zhang, Z. Chen, H. Li and W. Zhang, ``Beam Training and Alignment for RIS-Assisted Millimeter-Wave Systems: State of the Art and Beyond,'' \emph{IEEE Wireless Communications}, vol.~29, no.~6, pp. 64--71, Dec. 2022.



\bibitem{10}

N. S. Perovic, M. D. Renzo and M. F. Flanagan, ``Channel Capacity Optimization Using Reconfigurable Intelligent Surfaces in Indoor mmWave Environments,'' in \emph{Proc. IEEE ICC}, Dublin, Ireland, June 2020, pp. 1--7.

\bibitem{11}

P. Wang, J. Fang, L. Dai and H. Li, ``Joint Transceiver and Large Intelligent Surface Design for Massive MIMO mmWave Systems,'' \emph{IEEE Transactions on Wireless Communications}, vol.~20, no.~2, pp. 1052--1064, Feb. 2021.


\bibitem{12}
C. Feng, W. Shen, J. An and L. Hanzo, ``Joint Hybrid and Passive RIS-Assisted Beamforming for mmWave MIMO Systems Relying on Dynamically Configured Subarrays,'' \emph{IEEE Internet of Things Journal}, vol.~9, no.~15, pp. 13913--13926, Aug. 2022.


\bibitem{13}

R. Li, B. Guo, M. Tao, Y. -F. Liu and W. Yu, ``Joint Design of Hybrid Beamforming and Reflection Coefficients in RIS-Aided mmWave MIMO Systems,'' \emph{IEEE Transactions on Communications}, vol.~70, no.~4, pp. 2404--2416, April 2022.




\bibitem{14}
P. Wang, J. Fang, W. Zhang and H. Li, ``Fast Beam Training and Alignment for IRS-Assisted Millimeter Wave/Terahertz Systems,'' \emph{IEEE Transactions on Wireless Communications}, vol.~21, no.~4, pp. 2710--2724, April 2022.


\bibitem{15}
X. Wei, L. Dai, Y. Zhao, G. Yu and X. Duan, ``Codebook design and beam training for extremely large-scale RIS: Far-field or near-field?,'' \emph{China Communications}, vol.~19, no.~6, pp. 193--204, June 2022.





\bibitem{16}

W. Wang and W. Zhang, ``Joint Beam Training and Positioning for Intelligent Reflecting Surfaces Assisted Millimeter Wave Communications,'' \emph{IEEE Transactions on Wireless Communications}, vol.~20, no.~10, pp. 6282--6297, Oct. 2021.


\bibitem{17}

Q. He, D. Yuan and A. Ephremides, ``Optimal Link Scheduling for Age Minimization in Wireless Systems,'' \emph{IEEE Transactions on Information Theory}, vol.~64, no.~7, pp. 5381--5394, July 2018.

\bibitem{18}

I. Kadota, A. Sinha and E. Modiano, ``Scheduling Algorithms for Optimizing Age of Information in Wireless Networks With Throughput Constraints,'' \emph{IEEE/ACM Transactions on Networking}, vol.~27, no.~4, pp. 1359--1372, Aug. 2019.


\bibitem{19}

Q. Liu, H. Zeng and M. Chen, ``Minimizing AoI With Throughput Requirements in Multi-Path Network Communication,'' \emph{IEEE/ACM Transactions on Networking}, vol.~30, no.~3, pp. 1203--1216, June 2022.


\bibitem{20}

R. V. Bhat, R. Vaze and M. Motani, ``Throughput Maximization With an Average Age of Information Constraint in Fading Channels,'' \emph{IEEE Transactions on Wireless Communications}, vol.~20, no.~1, pp. 481--494, Jan. 2021.

\bibitem{21}

F. Wu, H. Zhang, J. Wu, Z. Han, H. V. Poor and L. Song, ``UAV-to-Device Underlay Communications: Age of Information Minimization by Multi-Agent Deep Reinforcement Learning,'' \emph{IEEE Transactions on Communications}, vol.~69, no.~7, pp. 4461--4475, July 2021.


\bibitem{22}

T. D. P. Perera, D. N. K. Jayakody, I. Pitas and S. Garg, ``Age of Information in SWIPT-Enabled Wireless Communication System for 5GB,'' \emph{IEEE Wireless Communications}, vol.~27, no.~5, pp. 162--167, Oct. 2020.


\bibitem{23}
A. Muhammad, I. Sorkhoh, M. Samir, D. Ebrahimi and C. Assi, ``Minimizing Age of Information in Multiaccess-Edge-Computing-Assisted IoT Networks,'' \emph{IEEE Internet of Things Journal}, vol.~9, no.~15, pp. 13052--13066, Aug. 2022.

\bibitem{24}

J. Lee, D. Niyato, Y. L. Guan and D. I. Kim, ``Learning to Schedule Joint Radar-Communication With Deep Multi-Agent Reinforcement Learning,'' \emph{IEEE Transactions on Vehicular Technology}, vol.~71, no.~1, pp. 406--422, Jan. 2022.

\bibitem{25}

I. Sorkhoh, M. A. Arfaoui, M. Khabbaz and C. Assi, ``Optimizing Information Freshness in RIS-Assisted Cooperative Autonomous Driving,'' in \emph{Proc. IEEE ICC}, Seoul, Korea, Republic of, May 2022, pp. 1518--1523.



\bibitem{26}


A. Muhammad, M. Elhattab, M. Shokry and C. Assi, ``Leveraging Reconfigurable Intelligent Surface to Minimize Age of Information in Wireless Networks,'' in \emph{Proc. IEEE ICC}, Seoul, Korea, Republic of, May 2022, pp. 2525--2530.


\bibitem{27}

M. Samir, M. Elhattab, C. Assi, S. Sharafeddine and A. Ghrayeb, ``Optimizing Age of Information Through Aerial Reconfigurable Intelligent Surfaces: A Deep Reinforcement Learning Approach,'' \emph{IEEE Transactions on Vehicular Technology}, vol.~70, no.~4, pp. 3978--3983, April 2021.


\bibitem{28}

X. Fan, M. Liu, Y. Chen, S. Sun, Z. Li and X. Guo, ``RIS-Assisted UAV for Fresh Data Collection in 3D Urban Environments: A Deep Reinforcement Learning Approach,'' \emph{IEEE Transactions on Vehicular Technology}, vol.~72, no.~1, pp. 632--647, Jan. 2023.

\bibitem{29}

X. Feng, S. Fu, F. Fang and F. R. Yu, ``Optimizing Age of Information in RIS-Assisted NOMA Networks: A Deep Reinforcement Learning Approach,'' \emph{IEEE Wireless Communications Letters}, vol.~11, no.~10, pp. 2100--2104, Oct. 2022.

\bibitem{30}

W. Lyu, Y. Xiu, J. Zhao and Z. Zhang, ``Optimizing the Age of Information in RIS-Aided SWIPT Networks,'' \emph{IEEE Transactions on Vehicular Technology}, vol.~72, no.~2, pp. 2615--2619, Feb. 2023.

\bibitem{31}

Z. Shi, H. Wang, Y. Fu, X. Ye, G. Yang and S. Ma, ``Outage Performance and AoI Minimization of HARQ-IR-RIS Aided IoT Networks,'' \emph{IEEE Transactions on Communications}, vol.~71, no.~3, pp. 1740--1754, March 2023.

\bibitem{31b}

Q. Wu and R. Zhang, ``Beamforming Optimization for Wireless Network Aided by Intelligent Reflecting Surface With Discrete Phase Shifts,'' \emph{IEEE Transactions on Communications}, vol.~68, no.~3, pp. 1838--1851, March 2020.

\bibitem{32}

Q. Wu and R. Zhang, ``Intelligent Reflecting Surface Enhanced Wireless Network via Joint Active and Passive Beamforming,'' \emph{IEEE Transactions on Wireless Communications}, vol.~18, no.~11, pp. 5394--5409, Nov. 2019.


\bibitem{33}

Q. Wu and R. Zhang, ``Towards Smart and Reconfigurable Environment: Intelligent Reflecting Surface Aided Wireless Network,'' \emph{IEEE Communications Magazine}, vol.~58, no.~1, pp. 106--112, Jan. 2020.

\bibitem{33b}

M. Gao, B. Ai, Y. Niu, Z. Han and Z. Zhong, ``IRS-Assisted High-Speed Train Communications: Outage Probability Minimization with Statistical CSI,'' in \emph{Proc. IEEE ICC}, Montreal, QC, Canada, June 2021, pp. 1--6.

\bibitem{07}
Y. Chen et al., ``Reconfigurable Intelligent Surface Assisted Device-to-Device Communications,'' \emph{IEEE Transactions on Wireless Communications}, vol.~20, no.~5, pp. 2792--2804, May 2021.




\bibitem{35}

O. E. Ayach, S. Rajagopal, S. Abu-Surra, Z. Pi and R. W. Heath, ``Spatially Sparse Precoding in Millimeter Wave MIMO Systems,'' \emph{IEEE Transactions on Wireless Communications}, vol.~13, no.~3, pp. 1499--1513, March 2014.

\bibitem{34}

N. Huang, T. Wang, Y. Wu, Q. Wu and T. Q. S. Quek, ``Integrated Sensing and Communication Assisted Mobile Edge Computing: An Energy-Efficient Design via Intelligent Reflecting Surface,'' \emph{IEEE Wireless Communications Letters}, vol.~11, no.~10, pp. 2085--2089, Oct. 2022.

\bibitem{36}

Z. Xiao, T. He, P. Xia and X. -G. Xia, ``Hierarchical Codebook Design for Beamforming Training in Millimeter-Wave Communication,'' \emph{IEEE Transactions on Wireless Communications}, vol.~15, no.~5, pp. 3380--3392, May 2016.


\bibitem{37}

C. Liu, M. Li, S. V. Hanly, I. B. Collings and P. Whiting, ``Millimeter Wave Beam Alignment: Large Deviations Analysis and Design Insights,'' \emph{IEEE Journal on Selected Areas in Communications}, vol.~35, no.~7, pp. 1619--1631, July 2017.


\bibitem{38}

M. R. Akdeniz et al., ``Millimeter Wave Channel Modeling and Cellular Capacity Evaluation,'' \emph{IEEE Journal on Selected Areas in Communications}, vol.~32, no.~6, pp. 1164--1179, June 2014.





\bibitem{39}

M. K. Samimi, G. R. MacCartney, S. Sun and T. S. Rappaport, ``28 GHz Millimeter-Wave Ultrawideband Small-Scale Fading Models in Wireless Channels,'' in \emph{Proc. IEEE VTC Spring}, Nanjing, China, May 2016, pp. 1--6.


\bibitem{40}
R. Li, B. Guo, M. Tao, Y. -F. Liu and W. Yu, ``Joint Design of Hybrid Beamforming and Reflection Coefficients in RIS-Aided mmWave MIMO Systems,'' \emph{IEEE Transactions on Communications}, vol.~70, no.~4, pp. 2404--2416, April 2022.

\bibitem{41}
H. Gao, K. Cui, C. Huang and C. Yuen, ``Robust Beamforming for RIS-Assisted Wireless Communications With Discrete Phase Shifts,'' \emph{IEEE Wireless Communications Letters}, vol.~10, no.~12, pp. 2619--2623, Dec. 2021.



\end{thebibliography}

\end{document}